\documentclass[bj,english]{imsart}

\RequirePackage{amsmath,amsthm,amsfonts,amssymb}
\RequirePackage[numbers]{natbib}

\RequirePackage[colorlinks,citecolor=blue,urlcolor=blue]{hyperref}
\RequirePackage{graphicx}

\usepackage{pdfpages}

\startlocaldefs

	\usepackage{lmodern}				

	\usepackage[utf8x]{inputenc}

	\usepackage{babel}					

\RequirePackage{bbm}					
\RequirePackage{stmaryrd}

	\usepackage{pstricks}
	\usepackage{hyperref}
	\usepackage{cleveref}

\newcommand{\ensnombre}[1]{\mathbb{#1}}%
\newcommand{\N}{\ensnombre{N}}

\newcommand{\R}{\ensnombre{R}}

\newcommand{\abs}[1]{\left\lvert #1 \right\rvert}

\newcommand{\defeq}{\mathrel{\mathop:}=}
\newcommand{\eqdef}{\mathrel{=}:}

\newcommand{\ind}{\mathbbm{1}}
\renewcommand{\P}{\mathbb{P}}

\newcommand{\E}{\mathbb{E}}

\def \Var{\hbox{{\textrm{Var}}}}

\def \Cov{\hbox{{\textrm{Cov}}}}

\newcommand\restrict[1]{\raisebox{-.5ex}{$|$}_{#1}}

\theoremstyle{plain}
\newtheorem{thm}{Theorem}[section]
\newtheorem{cor}[thm]{Corollary}
\newtheorem{lem}[thm]{Lemma}
\newtheorem{prop}[thm]{Proposition}

\theoremstyle{remark}
\newtheorem{rem}[thm]{Remark}

\newcommand{\cvloi}{\overset{\mathcal{L}}{\underset{n\to\infty}{\longrightarrow}}}
\newcommand{\egloi}{\overset{\mathcal{L}}{=}}

\newcommand{\pscal}[1]{\langle #1 \rangle}

\hypersetup{%
colorlinks=true,%
linkcolor = black,%
urlcolor = blue,%
citecolor = blue}

\newcommand{\usd}{\frac{1}{2}}

\newcommand{\ysnj}{Y_{\sigma_n(j)}} 
\newcommand{\ysnjpu}{Y_{\sigma_n(j+1)}} 
\newcommand{\sn}{\sigma_n}

\newcommand{\lp}{\left(}
\newcommand{\rp}{\right)}

\newcommand{\Yu}{Y^{\textbf u}}
\newcommand{\Xu}{X^{\textbf u}}

\endlocaldefs

\begin{document}

\begin{frontmatter}
\title{Global Sensitivity Analysis: a novel generation of mighty estimators based on rank statistics}
\runtitle{GSA: mighty estimators and rank statistics}

\begin{aug}
\author[1]{\fnms{Fabrice} \snm{Gamboa}\ead[label=e1]{fabrice.gamboa@math.univ-toulouse.fr}},
\author[2]{\fnms{Pierre} \snm{Gremaud}\ead[label=e2]{gremaud@ncsu.edu}},
\author[3]{\fnms{Thierry} \snm{Klein}\ead[label=e3]{thierry01.klein@enac.fr}},
\and
\author[4]{\fnms{Agnès} \snm{Lagnoux}\ead[label=e4]{lagnoux@univ-tlse2.fr}}

\address[1]{Institut de Math\'ematiques de Toulouse and ANITI; UMR5219. Universit\'e de Toulouse; CNRS. UT3, F-31062 Toulouse, France., \printead{e1}}
\address[2]{Department of Mathematics. NC State University. Raleigh, North Carolina 27695, USA., \printead{e2}}
\address[3]{Institut de Math\'ematiques de Toulouse; UMR5219. Universit\'e de Toulouse; ENAC - Ecole Nationale de l'Aviation Civile , Universit\'e de Toulouse, France, \printead{e3}}
\address[4]{Institut de Math\'ematiques de Toulouse; UMR5219. Universit\'e de Toulouse; CNRS. UT2J, F-31058 Toulouse, France., \printead{e4}}
\end{aug}

\begin{abstract}
~~We propose a new statistical estimation framework for a large family of global sensitivity analysis indices. Our approach is based on rank statistics and uses an empirical correlation coefficient recently introduced by Chatterjee \cite{Chatterjee2019}. We show how to apply this approach  to compute not only the Cram\'er-von-Mises indices, directly related to Chatterjee's notion of correlation, but also first-order Sobol' indices, general metric space indices and higher-order moment indices. We establish consistency of the resulting estimators and demonstrate their numerical efficiency, especially for small sample sizes. In addition, we prove a central limit theorem for the estimators of the first-order Sobol' indices.
\end{abstract}

\begin{keyword}
\kwd{Global sensitivity analysis} 
\kwd{Cram\'er-von-Mises distance} 
\kwd{Pick-Freeze method} 
\kwd{Chatterjee's coefficient of correlation} 
\kwd{Sobol' indices estimation}
\end{keyword}

\end{frontmatter}

\textbf{AMS subject classification}  62G05, 62G20, 62G30.

\section{Introduction}\label{sec:into}

The use of complex computer models for the analysis of applications from the sciences, engineering and other fields is by now routine. Often, the models are expensive to run in terms of computational time. It is thus crucial to understand, with just a few runs, the global influence of  one or several inputs  on  the  output of the system under study  \cite{santner2003design}. When these inputs are regarded as random elements, this   problem  is  generally  referred to as  Global Sensitivity Analysis (GSA).
  We refer to \cite{rocquigny2008uncertainty,saltelli-sensitivity,sobol1993}
 for an overview of the practical aspects of GSA.

A  popular and highly useful tool to quantify input influence is the  Sobol' indices. These indices were first introduced in \cite{sobol2001global} and are well tailored to the case of scalar outputs (and even to the case of vectorial and functional outputs). Thanks to the Hoeffding decomposition  \cite{Hoeffding48}, the Sobol' indices compare the conditional variance of the output knowing some of the input variables to the total variance of the output. 
Since Sobol' indices are variance based, they only quantify the second-order influence of the inputs. Many authors proposed other criteria to compare the conditional distribution of the output knowing some of the inputs to the distribution of the output (see, e.g., higher moments indices in \cite{Owen13,ODC13,Owen12}, indices using divergences or distances between measures in \cite{borgonovo2007new,borgonovo2011moment, DaVeiga13}, goal-oriented indices using contrast functions in \cite{FKR13}, distribution-based indices as Cramér-von-Mises indices in \cite{GKL18}). 


Many different estimation procedures of the Sobol' indices have been proposed and studied. Some estimation procedures are based on different designs of experiment using for example polynomial chaos  (see \cite{Sudret2008global} and the reference therein  for more details). Some other natural procedures are based on Monte-Carlo or quasi Monte-Carlo design of experiments (see \cite{Kucherenko2017different,Owen13} and references therein for more details). 
In particular, an efficient estimation of the Sobol' indices  can be performed through the so-called Pick-Freeze method. See Section \ref{ssec:def:Sobol'} below for  its description. Observe that the Pick-Freeze estimation procedure allows the estimation of  several sensitivity indices: the classical Sobol' indices for real-valued outputs, as well as their generalization for vectorial-valued codes, but also the indices based on higher moments  \cite{ODC13} and 
the Cram\'er-von-Mises indices which take into account on the whole distribution (see \cite{GKL18,FGM2017} and Section \ref{ssec:def:cvm} below for more details on such indices). In addition, the Pick-Freeze estimators have desirable statistical properties such as consistency, central limit theorem (CLT) with a rate of convergence in $\sqrt n$, concentration inequalities and Berry-Esseen bounds, and asymptotic efficiency (see \cite{janon2012asymptotic,pickfreeze} and Section \ref{ssec:def:Sobol'} below for more details). However, the Pick-Freeze scheme  has  two major  drawbacks. First, it relies on a particular experimental design that may be unavailable in practice. Second, its cost may be prohibitive when estimating several indices. Naturally, the cost of an estimator depends on the cost of each evaluation of the code and on the number of evaluations. The number of model calls to estimate all first-order Sobol' indices
grows linearly with the number of input parameters.  For example, if we consider $p=99$ input parameters  and only  $n=1000$ calls are allowed, then only a sample of size $n/(p+1)=10$ is available to estimate each single first-order Sobol' index. It is a poor amount of information to get a satisfying estimation of the Sobol' indices. 

In a recent work \cite{Chatterjee2019}, Chatterjee  studies the dependence between two variables by introducing an empirical correlation coefficient based on rank statistics, see Section \ref{ssec:estim_cvm_chatterjee} below for the precise definition. Further, the quantification of the dependence has also been investigated in the bivariate case (namely, in the copula setting), see \cite{trutschnig2011strong,dette2013copula,azadkia2019simple}. The striking point of \cite{Chatterjee2019} is that this empirical correlation coefficient converges almost surely (a.s.) to the Cram\'er-von-Mises index priorly introduced in \cite{GKL18} as the sample size goes to infinity. 

In this paper, we show how to embed Chatterjee's method in the GSA framework, thereby eliminating the two drawbacks of the classical Pick-Freeze estimation mentioned above. Thus no particular design of experiment is needed for the estimation that can be done with a unique $n$-sample. In addition, we generalize Chatterjee's approach to allow the estimation of a large class of GSA indices which includes the Sobol' indices and the higher-order moment indices proposed by Owen \cite{Owen13,ODC13,Owen12} (see Section \ref{ssec:def:Sobol'} below). 
Using a single sample of size $n$, it is now possible to estimate at the same time all the first-order Sobol' indices, the Cram\'er-von-Mises indices, and other useful sensitivity indices. 
Furthermore, we show  that this new procedure provides estimators also converging at rate $\sqrt n$ by proving a CLT in the estimation of the first-order Sobol' indices. 

The paper is organized as follows. In Section \ref{sec:Sobol'_cvm}, we recall the context of GSA, the definition of the Sobol' indices and Cram\'er-von-Mises indices, and their classical Pick-Freeze estimations. Section \ref{sec:chat} focuses on Chatterjee's method, called rank-based method in this paper. More precisely, we show how the Cramér-von-Mises indices can be also estimated using the rank-based method (Section \ref{ssec:estim_cvm_chatterjee}) and we present its generalization to estimate sensitivity indices together with the consistency of the estimation procedure (Section \ref{sec:gene}). Section \ref{ssec:Sobol'} is dedicated to Sobol' indices.  We prove the asymptotic normality of their estimators based on rank statistics. In addition, we propose a comparison of the different estimation procedures in Section \ref{ssec:comparison} while 
Section \ref{sec:known_indices} considers other classical sensitivity indices.  
Section \ref{sec:exnum} is dedicated to a numerical comparison between the Pick-Freeze estimation procedure and the rank-based method. We first compare the numerical performances of both estimators on a linear model. Finally, we consider a real life application. As expected, the rank-based estimation method outperforms the classical Pick-Freeze procedure, even for small sample sizes (which are  common  in practice). 
Conclusions and perspectives are offered in Section \ref{sec:conclu}.

After a first submission of this paper, we have been aware of the very nice work of Broto {\it et al} \cite{broto2020variance} concerning the statistical estimation of Shapley effect where the use of closest neighbors is also put in action to built consistent estimates. We also notice that there is actually a strong scientific interest around asymptotic behavior for the statistical method introduced in   
\cite{Chatterjee2019}. Indeed, during the revision of this paper, we have a look on the very nice paper \cite{auddy2021exact} where an asymptotic contiguity study is performed. 

\section{Global sensitivity analysis and Pick-Freeze estimation}
\label{sec:Sobol'_cvm}

\subsection{Sobol' indices}
\label{ssec:def:Sobol'}

\paragraph*{Context and definition of the Sobol' indices}

The quantity of interest (QoI) $Y$ is obtained from  the numerical code and is regarded as a function $f$ of the vector of the distributed input $(X_i)_{i=1,\ldots,p}$
\begin{equation}\label{def:model}
Y=f(X_{1},\ldots, X_{p}),
\end{equation}
where $f$ is defined on the state space  $E_1\times \ldots \times E_p$, $X_i \in E_i$, $i=1,\dots,p$. Classically, the $X_i$'s  are assumed to be independent random variables and a sensitivity analysis is performed using the Hoeffding decomposition \cite{anton84,van2000asymptotic} leading to the standard Sobol' indices \cite{sobol1993}. This assumption is made throughout the paper, unless explicitly stated otherwise. More precisely, assume  $f$ to be real-valued and square integrable and let $\textbf u$ be a subset of  $\{1, \ldots, p\}$ and $\sim\!
\textbf u$ its complementary set in  $\{1, \ldots, p\}$. 
Setting $X_{\textbf u}=(X_i, i \in\textbf u)$ and $X_{\sim \textbf u}=(X_i, i \in\sim\!\textbf u)$,
 the corresponding Sobol' indices take the form 
\begin{equation}\label{eq:defsob}
S^{\textbf u}=\frac{\Var\left(\mathbb{E}[Y|X_{\textbf u}]\right)}{\Var(Y)} \quad \text{and} \quad S^{\sim \textbf u}=\frac{\Var\left(\mathbb{E}[Y|X_{\sim \textbf u}]\right)}{\Var(Y)}.
\end{equation}

By definition, the Sobol' indices  quantify the fluctuations of the output $Y$ around its mean. When the practitioner is not interested in the mean behavior of $Y$ but rather in its median, in its tail, or even in its quantiles, the Sobol' indices become less appropriate to quantify sensitivity. GSA must then be performed in a framework which takes into account more than one specific moment, such as the variance for Sobol' indices.

\paragraph*{Pick-Freeze estimation procedure of the Sobol' indices}

 A Monte-Carlo scheme can be used to estimate the Sobol' indices. The corresponding   Pick-Freeze approach from \cite{pickfreeze,GKL18,janon2012asymptotic} relies on expressing the variances of the conditional expectations in terms of covariances which are easily and well estimated by their empirical versions. 
To that end, we define, for any subset $\textbf u$ of $\{1,\dots,p\}$
\begin{align}\label{def:pf}
Y^{\textbf u}\defeq f(\Xu).
\end{align}
where $\Xu$ is such that $X_{\mathbf u}^{\mathbf u} = X_{\mathbf u}$
and  $\Xu_i=X'_i$ if $i\in \sim \textbf u$, $X'_i$ being an independent copy of $X_i$. The estimation procedure relies on the following result
\begin{align}\label{eq:pf}
&\Var(\E[Y|X_{\textbf u}])=\Cov(Y,\Yu).
\end{align}
The reader is referred to \cite[Lemma 1.2]{janon2012asymptotic} for its proof.

The natural estimator of $S^{\mathbf u}$ is then given by
\begin{align}\label{def:Sn}
S_n^{\mathbf u}=\frac{
\frac{1}{n}\sum_{j=1}^nY_jY_j^{\mathbf u}-\left(\frac{1}{n}\sum_{j=1}^nY_j\right)\left(\frac{1}{n}\sum_{j=1}^nY_j^{\mathbf u}\right)}
{
\frac{1}{n}\sum_{j=1}^n(Y_j)^2-\left(\frac{1}{n}\sum_{j=1}^nY_j\right)^2
}.
\end{align}
A slightly different estimator that uses all the information available is introduced in \cite{janon2012asymptotic}:
\begin{align}\label{def:Tn}
T_n^{\mathbf u}=\frac{
\frac{1}{n}\sum_{j=1}^nY_jY_j^{\mathbf u}-\left(\frac{1}{n}\sum_{j=1}^n\frac{Y_j+Y_j^{\mathbf u}}{2}\right)^2}
{
\frac{1}{n}\sum_{j=1}^n\frac{(Y_j)^2+(Y_j^{\mathbf u})^2}{2}-\left(\frac{1}{n}\sum_{j=1}^n\frac{Y_j+Y_j^{\mathbf u}}{2}\right)^2
}.
\end{align}

\paragraph*{Asymptotic study} Such 
estimation procedures have been proved to be consistent and asymptotically normal (i.e.\ the rate of convergence is $\sqrt{n}$) in \cite{janon2012asymptotic,pickfreeze}. The limiting variances can be computed explicitly, allowing the practitioner to build confidence intervals. 
In addition, the sequence of estimators $(T_{n}^{\mathbf u})_n$ is asymptotically efficient to estimate $S^{\mathbf u}$ from such a design of experiment (see, \cite{van2000asymptotic} for the definition of the asymptotic efficiency and \cite{pickfreeze} for the details of the result).

\subsection{Cram\'er-von-Mises indices}
\label{ssec:def:cvm}

\paragraph*{Definition of the Cram\'er-von-Mises indices} The Cram\'er-von-Mises indices introduced in \cite{GKL18} provide alternative indices based on the whole distribution rather than on the second moment of the output $Y$ only. The main idea of Cramér-von-Mises indices is to compare the conditional cumulative distribution function (c.d.f.) to the unconditional one via the $L^2$-norm. As for the Sobol' indices, they compare the conditional expectation of the output to the unconditional one. Notably, they are constructed following a similar scheme so that any procedure that 
estimates one index can be adapted to estimate the other.

More precisely, the Cramér-von-Mises indices are defined by 
\begin{align}\label{eq:defCVM}
S_{2,CVM}^{\textbf u}= \frac{\int_{\R}\E\left[\left(F(t)-F^{\textbf u}(t)\right)^{2}\right]dF(t)}{\int_{\R} F(t)(1-F(t))dF(t)} 
\end{align}
where $F$ is the cumulative distribution function of $Y$
\[
F(t)=\P\left(Y\leqslant t\right)=\E\left[\ind_{\{Y\leqslant t\}}\right] \quad  (t\in \R)
\]
and $F^{\textbf u}$ is its Pick-Freeze version:
\[
F^{\textbf u}(t)=\P\left(Y\leqslant t|X_{\textbf u}\right)=\E\bigl[\ind_{\{Y\leqslant t\}}|X_{\textbf u}\bigr] \quad  (t\in \R).
\]
This definition stems from the Hoeffding decomposition of the collection of r.v.\ $(\ind_{\{Y\leqslant t\}})_{t\in \R}$.

\paragraph*{Pick-Freeze estimation procedure of the Cram\'er-von-Mises indices}
The estimation procedure relies on \eqref{eq:pf} with $Y\leftarrow \ind_{\{Y\leqslant t\}}$:
\begin{align}\label{eq:pf_cvm}
&\Var(\E[\ind_{\{Y\leqslant t\}}|X_{\textbf u}])=\Cov(\ind_{\{Y\leqslant t\}},\ind_{\{\Yu\leqslant t\}}).
\end{align}
Consequently, the Monte-Carlo estimation can be done as follows.
 In addition to the classical design of experiment required to estimate the Sobol' indices (an $n$-sample $(Y_1,\ldots,Y_n)$ of the output $Y$ and an $n$-sample $(\Yu_1,\ldots,\Yu_n)$ of its Pick-Freeze version   $\Yu$), a third independent $n$ sample $(W_1,\ldots,W_n)$ of the output $Y$ is necessary in order to deal with the integral with respect to $dF(t)$ in \eqref{eq:defCVM}. Then the empirical estimator of $S_{2,CVM}^{\mathbf u}$ is
\begin{align}\label{def:cvm_n}
\frac{\frac{1}{n}\sum_{k=1}^n\left(
\frac{1}{n}\sum_{j=1}^n\ind_{\{Y_j\leqslant W_k\}}\ind_{\{\Yu_j \leqslant W_k\}}-\frac{1}{n}\sum_{j=1}^n\ind_{\{Y_j\leqslant W_k\}}\frac{1}{n}\sum_{j=1}^n\ind_{\{\Yu_j\leqslant W_k\}}\right)}
{\frac{1}{n}\sum_{k=1}^n\left(
\frac{1}{n}\sum_{j=1}^n\ind_{\{Y_j\leqslant W_k\}}-\left(\frac{1}{n}\sum_{j=1}^n\ind_{\{Y_j\leqslant W_k\}}\right)^2\right)
}.
\end{align}

\paragraph*{Asymptotic study}  As showed in \cite{GKL18}, this estimator is consistent and asymptotically Gaussian (i.e.\ the rate of convergence is $\sqrt{n}$). The limiting variance can be computed explicitly, allowing the practitioner to build confidence intervals.

\section{A novel generation of estimators based on rank statistics}\label{sec:chat}

\subsection{Chatterjee's correlation coefficient} \label{ssec:estim_cvm_chatterjee}

In \cite{Chatterjee2019}, Chatterjee considers a pair of real-valued random variables $(V,Y)$ and an i.i.d.\ sample $(V_j,Y_j)_{1\leqslant j\leqslant n}$. In order to simplify the presentation, we assume that the  laws of $V$ and $Y$ are both diffuse (ties are excluded). The pairs $(V_{(1)},Y_{(1)}),\ldots,(V_{(n)},Y_{(n)})$ are rearranged
in such a way that 
\[
V_{(1)}< \ldots< V_{(n)}.
\]
Then let $\pi(j)$ be the rank of $V_j$ in the sample $(V_1,\dots, V_n)$ of $V$ and define 
\begin{align}\label{def:N}
N'(j)=\begin{cases}
\pi^{-1} (\pi(j)+1) & \text{if  $\pi(j)+1\leqslant n$},\\
j & \text{if  $\pi(j)=n$}.\\
\end{cases}
\end{align}

The new correlation coefficient defined by Chatterjee in \cite{Chatterjee2019} is denoted $\xi_n(V,Y)$ and given by
\begin{equation}\label{eq:coeffChatterjee}
\frac{1}{n}\sum_{j=1}^n  \Bigl(\frac{1}{n}\sum_{k=1}^n \ind_{\{Y_k\leqslant Y_j\}}\ind_{\{Y_k\leqslant Y_{N'(j)}\}}- \Bigl(\frac{1}{n}\sum_{k=1}^n \ind_{\{Y_j\leqslant Y_k\}}\Bigr)^2\Bigr)\Big/\frac{1}{n}\sum_{j=1}^n F_n(Y_j)(1-F_n(Y_j))
\end{equation} 
where $F_n$ stands for the empirical distribution  function of $Y$: $F_n(t)=\frac 1n \sum_{k=1}^n \ind_{\{Y_k\leqslant t\}}$.

The author proves that $\xi_n(V,Y)$ converges a.s.\ to a deterministic limit $\xi(V,Y)$ which is equal to the Cramér-von-Mises sensitivity index $S_{2,CVM}^V$ with respect to $V$ as soon as $V$ is one of the random variables $X_1$, ..., $X_p$ in the model \eqref{def:model} that are assumed to be real-valued. 
Further, he also proves a CLT when $V$ and $Y$ are independent.

Observe that the analogue of the Pick-Freeze version $Y^V$ with respect to $V$  of $Y$ becomes $Y_{N}$ and \eqref{eq:pf_cvm} is replaced by the formula
\begin{equation}\label{eq:trick_chatterjee}
\E[\ind_{\{Y_j\geqslant t\}}\ind_{\{Y_{N'(j)}\geqslant t\}}|V_1,\ldots, V_n]=G_{V_j}(t)G_{V_{N'(j)}}(t)
\end{equation}
for all $j=1,\dots,n$ that is mentioned in the proof of Lemma 7.10 in \cite[p.24]{Chatterjee2019}, with $G_{V}$ the conditional survival function: $G_{V}(t)=\P(Y\geqslant t| V)$.

It is worth noticing that a unique $n$ sample of input-output provides consistent estimations of the $p$ first-order Cram\'er-von-Mises indices.

\subsection{Generalization of Chatterjee's method}\label{sec:gene}


In this section, we propose a universal estimation procedure of expectations of the form 
\[
\E[\E[g(Y)|V]\E[h(Y)|V]],
\]
for two integrable functions $g$ and $h$. In fact, we consider a more general random element $V$ (no longer assumed to be real) and a more general permutation denoted by $\tau_n$.  
This result is a generalization of \eqref{eq:trick_chatterjee} and can be interpreted as an approximation of \eqref{eq:pf}. To this end, we introduce the function $\Psi_V$ defined by
\begin{align}\label{def:Psi}
\Psi_V(g)=\E[g(Y)|V]
\end{align}
for any integrable function $g$. Let $\mathcal F_n$ be the $\sigma$-algebra generated by $\{V_1,\dots,V_n\}$. Note that in Section \ref{ssec:estim_cvm_chatterjee}, we have considered $g(x)=g_t(x)=\ind_{\{x\geqslant t\}}$ so that $\Psi_V(g)=\P(Y\geqslant t| V)=G_V(t)$.

\begin{lem}\label{lem:pfChatt} 
Let $g$ and $h$ be two integrable functions such that $gh$ is also integrable. Let $(V_j,Y_j)_{1\leqslant j\leqslant n}$ be an $n$-sample of $(V,Y)$. Consider a $\mathcal F_n$-measurable random permutation $\tau_n$ such that $\tau_n(j)\neq j$, for all $j=1,\dots,n$. Then
\begin{equation}\label{eq:pfChatt}
\E\left[g(Y_j)h(Y_{\tau_n(j)})|V_1,\ldots, V_n\right]=\Psi_{V_j}(g)\Psi_{V_{\tau_n(j)}}(h).
\end{equation}
\end{lem}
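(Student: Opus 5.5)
Since $\tau_n$ is $\mathcal F_n$-measurable, it takes finitely many values (permutations of $\{1,\dots,n\}$ without fixed points), so I would decompose on the events $\{\tau_n=\sigma\}$. Fix $j$ and write
\[
g(Y_j)h(Y_{\tau_n(j)})=\sum_{\sigma} \ind_{\{\tau_n=\sigma\}}\, g(Y_j)h(Y_{\sigma(j)}),
\]
where the sum is over the deterministic permutations $\sigma$ with $\sigma(j)\neq j$. Each indicator $\ind_{\{\tau_n=\sigma\}}$ is $\mathcal F_n$-measurable, so it can be pulled out of the conditional expectation. This reduces the claim to showing, for each fixed $k\neq j$,
\[
\E\left[g(Y_j)h(Y_k)\,|\,V_1,\ldots,V_n\right]=\Psi_{V_j}(g)\Psi_{V_k}(h).
\]
Integrability needs a short check. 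If $k\neq j$, then $g(Y_j)h(Y_k)$ is a product of independent integrable variables, hence integrable. The finite sum is then integrable, so all conditional expectations are well defined. The hypothesis that $gh$ is integrable is only needed if $\tau_n$ may have fixed points, and here it does not.

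For the fixed-$k$ identity, I would use the independence of the pairs $(V_i,Y_i)$. By the standard property that conditioning on independent extra variables does not matter,
\[
\E[g(Y_j)h(Y_k)\,|\,V_1,\ldots,V_n]=\E[g(Y_j)h(Y_k)\,|\,V_j,V_k],
\]
since $(V_i)_{i\neq j,k}$ is independent of $(V_j,Y_j,V_k,Y_k)$. Then, because $(V_j,Y_j)$ and $(V_k,Y_k)$ are independent, the conditional expectation factorizes. I would verify this on test functions $\phi(V_j)\psi(V_k)$: we have $\E[g(Y_j)\phi(V_j)h(Y_k)\psi(V_k)]=\E[g(Y_j)\phi(V_j)]\,\E[h(Y_k)\psi(V_k)]$. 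Replacing $g(Y_j)$ by $\Psi_{V_j}(g)$ and $h(Y_k)$ by $\Psi_{V_k}(h)$ leaves this unchanged. A monotone class argument then extends the identity to all bounded $\sigma(V_j,V_k)$-measurable functions. Finally, since the pairs are identically distributed with the law of $(V,Y)$, we get $\E[g(Y_j)|V_j]=\Psi_{V_j}(g)$, with $\Psi$ the same function evaluated at $V_j$.

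Summing back gives $\sum_\sigma \ind_{\{\tau_n=\sigma\}}\Psi_{V_j}(g)\Psi_{V_{\sigma(j)}}(h)=\Psi_{V_j}(g)\Psi_{V_{\tau_n(j)}}(h)$, which is \eqref{eq:pfChatt}.

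The main point needing care is the reduction from the random index $\tau_n(j)$ to deterministic indices. This works only because $\tau_n$ is measurable with respect to the conditioning $\sigma$-algebra. The factorization step itself is routine.
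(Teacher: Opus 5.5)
Your proposal is correct and is essentially the paper's argument. The paper also splits on the value of the random index, using the events $\{\tau_n(j)=l\}$, $l\neq j$, rather than $\{\tau_n=\sigma\}$; it pulls out the $\mathcal F_n$-measurable indicators and factorizes $\E[g(Y_j)h(Y_l)\,|\,\mathcal F_n]=\Psi_{V_j}(g)\Psi_{V_l}(h)$ by independence of the pairs. Your test-function and monotone-class check of that factorization, and your remark that integrability of $gh$ is not needed when $\tau_n$ has no fixed points, make explicit what the paper leaves implicit.
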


The previous lemma (the proof of which has been postponed to Appendix  \ref{app:cons}) leads to a generalization of the first part of the numerator of $\xi_n$ defined in 
\eqref{eq:coeffChatterjee}. Following the same lines as in \cite{Chatterjee2019}, one may prove that such a quantity converges a.s. as $n\to \infty$ under some mild conditions. The reader is referred to Appendix \ref{app:cons} for the detailed proof of Proposition \ref{prop:cv}.

\begin{prop}\label{prop:cv}
Let $g$ and $h$ be two bounded measurable functions. Consider a $\mathcal F_n$-measurable random permutation $\tau_n$ with no fix point (i.e.\ $\tau_n(j)\neq j$ for all $j=1,\dots,n$) and such that $V_{\tau_n(i)}\overset{\mathcal L}{=} V_{\tau_n(j)}$ for any $i$ and $j=1,\dots,n$. In addition, we assume that for any  $j=1,\dots,n$, $V_{\tau_n(j)}\to V_j$ as $n\to \infty$ a.s.
Then 
$\chi_n(V,Y;g,h)$ defined by
\begin{align}
\chi_n(V,Y;g,h) =& \frac{1}{n}\sum_{j=1}^n g(Y_j)h(Y_{\tau_n(j)})\label{eq:coeffChatterjee_version_general_num}
\end{align}
 converges a.s. as $n\to \infty$
to $
\chi(V,Y;g,h)= \E[\Psi_{V}(g)\Psi_V(h)]$,
where $\Psi_{V}$ has been defined in \eqref{def:Psi}.
\end{prop}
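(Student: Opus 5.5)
Following Chatterjee, I split $\chi_n$ into a conditional-mean part and a fluctuation part:
\[
\chi_n(V,Y;g,h)=\underbrace{\frac1n\sum_{j=1}^n \Psi_{V_j}(g)\Psi_{V_{\tau_n(j)}}(h)}_{=:A_n}+\underbrace{\frac1n\sum_{j=1}^n \Bigl(g(Y_j)h(Y_{\tau_n(j)})-\Psi_{V_j}(g)\Psi_{V_{\tau_n(j)}}(h)\Bigr)}_{=:B_n}.
\]
By Lemma \ref{lem:pfChatt}, $A_n=\E[\chi_n\mid\mathcal F_n]$. It then suffices to prove two things: $A_n\to\E[\Psi_V(g)\Psi_V(h)]$ a.s., and $B_n\to0$ a.s.

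\textbf{The term $A_n$.} Write $A_n=\frac1n\sum_j\Psi_{V_j}(g)\Psi_{V_j}(h)+R_n$. The first sum converges a.s. to $\chi(V,Y;g,h)$ by the strong law of large numbers, since the summands are i.i.d. and bounded by $\|g\|_\infty\|h\|_\infty$. The remainder satisfies
\[
|R_n|\le \|g\|_\infty\,\frac1n\sum_j\bigl|\Psi_{V_{\tau_n(j)}}(h)-\Psi_{V_j}(h)\bigr|.
\]
To control it, I would approximate the measurable function $\phi=\Psi_\cdot(h)$, i.e.\ $v\mapsto\E[h(Y)\mid V=v]$, by a bounded continuous $\tilde\phi$ with $\E|\phi(V)-\tilde\phi(V)|<\varepsilon$ (Lusin-type density in $L^1(P_V)$). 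Then:
\begin{itemize}
\item $\frac1n\sum_j|\phi(V_j)-\tilde\phi(V_j)|\to\E|\phi-\tilde\phi|(V)$ a.s. by the SLLN.
\item $\frac1n\sum_j|\phi(V_{\tau_n(j)})-\tilde\phi(V_{\tau_n(j)})|$ is the same sum, because $\tau_n$ is a permutation.
\item $\frac1n\sum_j|\tilde\phi(V_{\tau_n(j)})-\tilde\phi(V_j)|\to0$, using continuity of $\tilde\phi$ and $V_{\tau_n(j)}\to V_j$.
\end{itemize}

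\textbf{Main obstacle.} The third bullet is where I expect the real difficulty. Pointwise convergence for each fixed $j$ does not by itself give Cesàro convergence of a triangular array. I would argue in expectation: by exchangeability (the hypothesis $V_{\tau_n(i)}\overset{\mathcal L}{=}V_{\tau_n(j)}$, plus symmetry of the construction), the expectation of the average equals $\E|\tilde\phi(V_{\tau_n(1)})-\tilde\phi(V_1)|$, which tends to $0$ by dominated convergence. This yields convergence in $L^1$. To upgrade to a.s. convergence, I would use bounded differences as below, or, if that fails, interpret the hypothesis as giving a uniform-in-$j$ statement, as in Chatterjee's nearest-neighbor setting. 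Letting $\varepsilon\to0$ then finishes $A_n$.

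\textbf{The term $B_n$.} Conditionally on $\mathcal F_n$, the $Y_j$ are independent, and $\chi_n$ is a function of $(Y_1,\dots,Y_n)$. Changing a single $Y_k$ affects only the terms $j=k$ and $j=\tau_n^{-1}(k)$, so $\chi_n$ moves by at most $4\|g\|_\infty\|h\|_\infty/n$. McDiarmid's bounded-differences inequality, applied conditionally, gives
\[
\P\bigl(|B_n|>\varepsilon\mid\mathcal F_n\bigr)\le 2\exp\bigl(-cn\varepsilon^2\bigr),
\]
with $c$ depending only on $\|g\|_\infty\|h\|_\infty$. This bound is summable in $n$, so Borel–Cantelli gives $B_n\to0$ a.s. The same bounded-differences argument, now applied to the i.i.d. $V$'s for the $A_n$ remainder, is what I would use to upgrade its $L^1$ convergence to a.s. convergence, provided $\tau_n$ depends on the $V$'s in a stable way. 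Combining the two parts gives $\chi_n\to\chi(V,Y;g,h)$ a.s.
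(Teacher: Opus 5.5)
\textbf{Genuine gap: the third bullet, almost sure Cesàro convergence.}

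Your route differs from the paper's. The paper shows $\E[\chi_n]\to\chi$ via Lemma \ref{lem:pfChatt} and a Lusin argument, then applies McDiarmid to $\chi_n$ as a function of the i.i.d.\ pairs $(V_j,Y_j)$. Most of your argument is sound. The conditional McDiarmid bound for $B_n$ is complete and uses only that $\tau_n$ is $\mathcal F_n$-measurable. Your first two bullets are correct, and the reindexing makes the hypothesis $V_{\tau_n(i)}\overset{\mathcal L}{=}V_{\tau_n(j)}$ unnecessary there.

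The gap is the one you flagged: the a.s.\ convergence $\frac1n\sum_j\abs{\tilde\phi(V_{\tau_n(j)})-\tilde\phi(V_j)}\to0$ is not established.
\begin{itemize}
\item Your $L^1$ step relies on the law of the \emph{pair} $(V_j,V_{\tau_n(j)})$ not depending on $j$. Equality of the marginals of the $V_{\tau_n(j)}$ does not give this.
\item The a.s.\ upgrade needs a stability property of $\tau_n$ that is not assumed.
\item The uniform-in-$j$ fallback is false even for the paper's $N$: the wrap-around term has $\abs{V_{N(j)}-V_j}=V_{(n)}-V_{(1)}$.
\end{itemize}

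No argument can close this step under the stated hypotheses, because the statement fails in that generality. Take $Y=V\sim\mathcal U[0,1]$ and $g=h=\mathrm{id}$, so $\chi=1/3$. Let $\tau_n=N$ for non-square $n$. For square $n\geqslant4$, let $\tau_n$ be the cyclic rank-successor on the indices $1,\dots,\lceil n/2\rceil$, and the half-turn $r\mapsto r+\lfloor k/2\rfloor\pmod k$ on the ranks of the last $k=\lfloor n/2\rfloor$ indices. Every hypothesis holds: each $V_{\tau_n(j)}$ is uniform by exchangeability within each block, and each fixed $j$ eventually stays in the first block. Yet along squares $\chi_n\to\frac12\cdot\frac13+\frac12\cdot\frac5{24}=\frac{13}{48}$ a.s.

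The paper's proof relies on the same unstated properties. It uses exchangeability of the pairs when writing $\frac1n\sum_j\E[g(Y_j)h(Y_{\tau_n(j)})]=\E[g(Y_1)h(Y_{\tau_n(1)})]$. Its McDiarmid step uses the nearest-neighbour bounded-differences property of \cite[Lemma 7.11]{Chatterjee2019}, namely that moving one $V_k$ changes $\tau_n$ at only $O(1)$ places.

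\textbf{How to close it for $N$.} For the permutation $N$ actually used, the third bullet holds deterministically, with neither exchangeability nor concentration in the $V$'s. Choose $\tilde\phi$ continuous with support in $[-M,M]$ (by density of $C_c(\R)$ in $L^1(P_V)$), with modulus of continuity $\omega$.
\begin{itemize}
\item Apart from the wrap-around pair, each $\{V_j,V_{N(j)}\}$ is a pair of consecutive order statistics.
\item Pairs with both points outside $[-M,M]$ contribute $0$.
\item Spacings of length at most $\eta$ contribute at most $\omega(\eta)$ each.
\item The spacings longer than $\eta$ with an endpoint in $[-M,M]$ have disjoint interiors. Each contains a length-$\eta$ subinterval of $[-M-\eta,M+\eta]$, so there are at most $2M/\eta+2$ of them.
\end{itemize}
Hence, on every tie-free realization,
\begin{equation*}
\frac1n\sum_{j=1}^n\abs{\tilde\phi(V_{N(j)})-\tilde\phi(V_j)}\leqslant\omega(\eta)+\frac{2\|\tilde\phi\|_\infty}{n}\Bigl(\frac{2M}{\eta}+3\Bigr),
\end{equation*}
which vanishes as $n\to\infty$ and then $\eta\to0$. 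With this fix your proof is complete for $\tau_n=N$. It is somewhat more elementary than the paper's, using only the SLLN, reindexing, a spacing count, and McDiarmid in the $Y$'s conditionally on $\mathcal F_n$.
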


Notice that the permutation $\tau_n=N$ defined by \begin{align}\label{def:Nnous}
N(j)=\begin{cases}
\pi^{-1} (\pi(j)+1) & \text{if  $\pi(j)+1\leqslant n$},\\
\pi^{-1} (1) & \text{if  $\pi(j)=n$}.\\
\end{cases}
\end{align}
satisfies the assumptions of Lemma \ref{lem:pfChatt} and  Proposition \ref{prop:cv}. Observe that $N$ only differs from $N'$ defined in \eqref{def:N}  at $j$ such that $\pi(j)=n$.

\section{The rank estimator of the first-order Sobol' indices}\label{ssec:Sobol'}

\subsection{Estimation procedure based on rank statistics}

We can now leverage the above results and construct a new family of estimators for Sobol' indices. 
More precisely, let us consider the model \eqref{def:model} and assume we want to estimate the first-order Sobol' index $S^1$ defined in \eqref{eq:defsob} with respect to $V=X_1$ assumed to be real-valued. 
We then define $N$ as in \eqref{def:Nnous} where $\pi$ is the rank of $X_1$.  
Taking $g(x)=h(x)=x$ and $\tau_n=N$,  
\eqref{eq:pfChatt} provides the analogue to $\xi_n$ to estimate the classical Sobol' indices:
\begin{align}\label{eq:coeffChatterjee_version_Sobol'}
\xi_n^{\text{Sobol'}}(X_1,Y) \defeq \frac{\frac 1n\sum_{j=1}^n Y_jY_{N(j)}-\left(\frac 1n \sum_{j=1}^n Y_j\right)^2}{\frac 1n \sum_{j=1}^n (Y_j)^2-\left(\frac 1n \sum_{j=1}^n Y_j\right)^2},
\end{align}
where the denominator is reduced to the empirical variance of $Y$. 
As the functions $g$ and $h$ are here unbounded, Proposition \ref{prop:cv} does not apply and thus offers no asymptotic information. However, the quantity of interest $Y$ being generally bounded in practice, appropriately truncated versions of $g$ and $h$ could be considered.

\subsection{A central limit theorem}\label{ssec:CLT}

We establish a CLT for the estimator $\xi_n^{\text{Sobol'}}(X_1,Y)$ of the first-order Sobol' index 
with respect to $X_1$ (assumed to be real-valued) under some mild assumptions on the model $f$ and the random input $X_1$ in \eqref{def:model}. The proof of the theorem is given in Appendix \ref{app:asymp_norm}. 

\begin{thm}\label{th:tcl}
Assume that $X_1$ is uniformly distributed on $[0,1]$ and $f$ in \eqref{def:model} is a twice differentiable function with respect to its first coordinate. Further, we suppose that $f$ and its two first derivatives (with respect to its first coordinate) are  bounded. Then 
\[
\sqrt n \left(\xi_n^{\text{Sobol'}}(X_1,Y)-S^1\right)
\]
is asymptotically Gaussian with zero mean and explicit variance $\sigma^2$ given in Appendix \ref{app:var_tcl}.
\end{thm}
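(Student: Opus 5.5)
We would reduce the CLT to a joint CLT for a few empirical means plus a small remainder, and then conclude by the delta method. Write $\xi_n^{\text{Sobol'}}(X_1,Y)=\Phi(A_n,B_n,C_n)$ with
\[
A_n=\frac1n\sum_{j=1}^n Y_jY_{N(j)},\qquad B_n=\frac1n\sum_{j=1}^n Y_j,\qquad C_n=\frac1n\sum_{j=1}^n Y_j^2,
\]
and $\Phi(a,b,c)=(a-b^2)/(c-b^2)$. Since $S^1=\Phi(\E[\Psi_{X_1}(\mathrm{id})^2],\E[Y],\E[Y^2])$ and $\Phi$ is smooth near this point (assuming $\Var(Y)>0$), it suffices to prove that $\sqrt n\,(A_n-\E[\Psi_{X_1}(\mathrm{id})^2],\,B_n-\E Y,\,C_n-\E Y^2)$ is jointly asymptotically Gaussian. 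The variance $\sigma^2$ then comes out of the gradient of $\Phi$ and the limiting covariance matrix.

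To handle $A_n$, set $\eta(x)=\E[Y\mid X_1=x]$, which is $C^2$ with bounded derivatives by the assumptions on $f$. Write $Y_j=\eta(X_{1,j})+\varepsilon_j$, where, conditionally on $\mathcal F_n$, the $\varepsilon_j$ are independent, centered and bounded. Lemma \ref{lem:pfChatt} gives the structure:
\[
Y_jY_{N(j)}=\eta(X_{1,j})\eta(X_{1,N(j)})+\eta(X_{1,j})\varepsilon_{N(j)}+\varepsilon_j\eta(X_{1,N(j)})+\varepsilon_j\varepsilon_{N(j)}.
\]
Let $U_{(1)}<\dots<U_{(n)}$ be the order statistics of $X_1$, with spacings $D_k=U_{(k+1)}-U_{(k)}$. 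A Taylor expansion gives $\eta(X_{1,N(j)})=\eta(X_{1,j})+\eta'(X_{1,j})D_{\pi(j)}+O(D^2_{\pi(j)})$. The wrap-around term at $\pi(j)=n$ is $O(1/n)$. Because $\sum_k D_k\le 1$ and $\sum_k D_k^2=O_P(1/n)$, we get
\[
\frac1n\sum_{j=1}^n \eta(X_{1,j})\eta(X_{1,N(j)})=\frac1n\sum_{j=1}^n\eta(X_{1,j})^2+O_P(1/n).
\]
The cross terms reduce in the same way to $\frac2n\sum_j\eta(X_{1,j})\varepsilon_j+o_P(n^{-1/2})$. This works because, given $\mathcal F_n$, $\frac1n\sum_j(\eta(X_{1,N^{-1}(j)})-\eta(X_{1,j}))\varepsilon_j$ has conditional variance $O(n^{-2}\sum_k D_k^2)=O_P(n^{-3})$. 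Hence
\[
A_n=\frac1n\sum_{j=1}^n\Bigl(\eta(X_{1,j})^2+2\eta(X_{1,j})\varepsilon_j\Bigr)+\frac1n\sum_{j=1}^n\varepsilon_j\varepsilon_{N(j)}+o_P(n^{-1/2}).
\]

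The main obstacle is the last term, $M_n=\frac1n\sum_j\varepsilon_j\varepsilon_{N(j)}$. It does not vanish at rate $\sqrt n$, and it is a sum along the cycle of consecutive order statistics. Conditionally on $\mathcal F_n$, reindex by rank: $M_n=\frac1n\sum_{k}\tilde\varepsilon_k\tilde\varepsilon_{k+1}$, where $\tilde\varepsilon_k$ is attached to $U_{(k)}$ and the indices are cyclic. This is a $1$-dependent, centered, bounded sequence, with conditional variance $\sigma_\varepsilon^2(U_{(k)})=\Var(Y\mid X_1=U_{(k)})$. Its terms form a martingale difference array for the filtration generated by $\tilde\varepsilon_1,\dots,\tilde\varepsilon_k$. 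The martingale CLT, whose Lindeberg condition is trivial by boundedness, gives conditional asymptotic normality of $\sqrt n M_n$. The conditional variance is $\frac1n\sum_k\sigma_\varepsilon^2(U_{(k)})\sigma_\varepsilon^2(U_{(k+1)})\to\int_0^1\sigma_\varepsilon^4(x)\,dx$ a.s., by continuity of $\sigma^2_\varepsilon$ and Riemann sums.

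To make this joint with the i.i.d. part, I would condition on $\mathcal F_n$ throughout. The sum $\frac1n\sum_j(\eta^2(X_{1,j})-\E\eta^2)$ is $\mathcal F_n$-measurable and satisfies the ordinary CLT. Given $\mathcal F_n$, the remaining pieces, namely $\varepsilon_j$-linear sums (from $A_n$, $B_n$, $C_n$) and $M_n$, form a vector martingale array. The cross-covariance between the linear and quadratic pieces vanishes, since $\E[\varepsilon_j\,\varepsilon_j\varepsilon_{j'}\mid\mathcal F_n]=0$. Combining through characteristic functions, using $\E[e^{itZ_n}\mid\mathcal F_n]\to$ a deterministic limit, gives the joint Gaussian limit. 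The delta method then yields the result, with $\sigma^2$ assembled from these covariances.
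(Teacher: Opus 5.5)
Your argument is correct, and it takes a genuinely different route from the paper. The paper conditions on the $W$'s. It writes $Y_{\sigma_n(j)}\overset{\mathcal L}{=}f(j/(n+1),W_j)+\Delta_{n,j}$ and Taylor-expands $f$ in $x$ around the deterministic quantiles $j/(n+1)$. The one-dependent array $B_n$ (grid points and $W$'s) is handled by Orey's CLT. The term $C_n$, linear in $X_{\sigma_n(j)}-j/(n+1)$, is handled conditionally on the $W$'s via the exponential representation of uniform order statistics, and the two are glued with characteristic functions. You condition on the $X$'s instead. After writing $Y=\eta(X_1)+\varepsilon$, the $\mathcal F_n$-measurable part is an ordinary i.i.d.\ sum, and the rest, including the degenerate term $M_n$, is a martingale array in rank order. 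You only compare $\eta$ at neighbouring order statistics (total displacement $\sum_k D_k\leqslant 1$), never $f$ at $X_{\sigma_n(j)}$ against $f$ at $j/(n+1)$ (displacements of order $n^{-1/2}$). As a result, your proof really uses only regularity of $\eta$ (Lipschitz, even bounded variation, would do) and continuity of $x\mapsto\Var(Y\mid X_1=x)$, rather than two bounded derivatives of $f$ and a uniform $X_1$. It also gives a more readable variance. The paper's $\Sigma_B$ coincides with the limit covariance of your conditional part ($M_n$ included). Its $C_n$ carries the between part $\Cov\bigl((\eta^2,\eta,\E[Y^2\mid X_1])(X_1)\bigr)$, through the identity $\iint a(s)a(t)(s\wedge t-st)\,ds\,dt=\Var(A(X_1))$ for $A'=a$. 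By the law of total covariance, your limit matrix is $\Cov(I,Y,Y^2)+\E[\Var(Y\mid X_1)^2]\,e_1e_1^\top$, where $I=\eta(X_1)(2Y-\eta(X_1))$ is the efficient influence function from the proof of Proposition \ref{prop:ae}. Hence $\sigma^2=\sigma^2_{\min}+\E[\Var(Y\mid X_1)^2]/\Var(Y)^2$, consistent with the linear example, where $V^1_{\text{Rank}}-V^1_{\text{Eff}}=v_p^2$. The paper's route, in exchange, stays with $f$ and its derivatives and isolates all the randomness of $X_1$ in a single linear order-statistics term. A few details to tighten. In the third coordinate, the $\mathcal F_n$-measurable part is $\frac1n\sum_j\E[Y^2\mid X_1=X_{1,j}]$, and the conditional part $Y_j^2-\E[Y^2\mid X_1=X_{1,j}]$ is not linear in $\varepsilon_j$; its conditional orthogonality to $M_n$ still holds. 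The martingale CLT needs the predictable variation $\frac1n\sum_k\tilde\varepsilon_k^2\sigma^2_\varepsilon(U_{(k+1)})$, which tends to $\int_0^1\sigma^4_\varepsilon$ by a conditional law of large numbers. Finally, the wrap-around index makes the cross-term conditional variance $O_{\P}(n^{-2})$ rather than $O_{\P}(n^{-3})$, which is still enough.
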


\begin{rem}
The boundedness of $f$ implies that $f$ has a fourth moment, that is the minimal assumption to get a CLT. 

Moreover, let us observe that Theorem \ref{th:tcl} only implies the convergence in probability. Nevertheless, under the assumptions of Theorem \ref{th:tcl} ($f$ bounded so is $Y$), Proposition \ref{prop:cv} applies to derive the almost sure convergence
of $\xi^{Sobol}_n (X1, Y )$. 
\end{rem}

The assumption on the distribution of $X_1$ can be relaxed as stated in the following corollary.

\begin{cor}\label{cor:tcl}
Let $F_{X_1}$ be the cumulative distribution function of $X_1$.  Assume that $f\circ F_{X_1}^{-1}$ is a twice differentiable function such that $f\circ F_{X_1}^{-1}$ and its two first derivatives are bounded. Then the conclusion of Theorem \ref{th:tcl} still holds. 
\end{cor}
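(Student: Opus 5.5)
The plan is to reduce the corollary to Theorem \ref{th:tcl} through the probability integral transform. Set $U_1 \defeq F_{X_1}(X_1)$. We work under the standing convention of Section \ref{ssec:estim_cvm_chatterjee} that $X_1$ has a diffuse law, so $F_{X_1}$ is continuous. Then $U_1$ is uniformly distributed on $[0,1]$, and $X_1 = F_{X_1}^{-1}(U_1)$ almost surely, with $F_{X_1}^{-1}$ the generalized inverse. Define $\tilde f(u,x_2,\dots,x_p) \defeq f(F_{X_1}^{-1}(u),x_2,\dots,x_p)$. Almost surely
\[
Y = \tilde f(U_1,X_2,\dots,X_p),
\]
and $U_1$ is independent of $(X_2,\dots,X_p)$ because it is a measurable function of $X_1$. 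The hypotheses of the corollary say exactly that $\tilde f$ is twice differentiable in its first coordinate, with $\tilde f$ and its first two derivatives bounded. So the model $(U_1,X_2,\dots,X_p)\mapsto \tilde f$ satisfies all assumptions of Theorem \ref{th:tcl}.

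Next we check that neither the target nor the estimator changes under this substitution.

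\textbf{The target.} Since $X_1 = F_{X_1}^{-1}(U_1)$ and $U_1=F_{X_1}(X_1)$ almost surely, we have $\sigma(X_1)=\sigma(U_1)$ up to null sets. Hence $\E[Y|X_1]=\E[Y|U_1]$ almost surely, and $S^1$ is the same in both models.

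\textbf{The estimator.} $\xi_n^{\text{Sobol'}}(X_1,Y)$ depends on $X_1$ only through the ranks $\pi$, via the permutation $N$ of \eqref{def:Nnous}. The map $F_{X_1}$ is nondecreasing, and since the law of $X_1$ is diffuse the sample has almost surely no ties. We need one small verification: almost surely no two sample points share the same image under $F_{X_1}$. This holds because $F_{X_1}$ is constant only on intervals of $\P_{X_1}$-measure zero, and almost surely no sample point falls into such an interval. It follows that the ranks of $(U_{1,j})_j$ coincide with those of $(X_{1,j})_j$. Therefore $N$, and with it $\xi_n^{\text{Sobol'}}$, are identical almost surely when computed from $(U_{1,j},Y_j)_j$.

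Applying Theorem \ref{th:tcl} to $\tilde f$ now gives the asymptotic normality of $\sqrt n(\xi_n^{\text{Sobol'}}(X_1,Y)-S^1)$. The limiting variance $\sigma^2$ is given by the formula of Appendix \ref{app:var_tcl} with $f$ replaced by $f\circ F_{X_1}^{-1}$.

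The only delicate points are measure-theoretic: that $F_{X_1}^{-1}(U_1)=X_1$ almost surely, and that the ranks are preserved, meaning no spurious ties appear on the flat parts of $F_{X_1}$. Both follow from continuity of $F_{X_1}$ together with the fact that flat stretches of $F_{X_1}$ carry no mass. This is where I would take the most care in writing.
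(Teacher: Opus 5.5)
Your proof is correct and follows the route the paper intends. The paper states the corollary without a written proof, and its hypothesis on $f\circ F_{X_1}^{-1}$ points exactly to this reduction: the quantile transform $U_1=F_{X_1}(X_1)$, together with the rank invariance of $\xi_n^{\text{Sobol'}}$. You also handle carefully the measure-theoretic points the paper leaves implicit, namely almost sure inversion and the absence of spurious ties on flat parts of $F_{X_1}$, both under the paper's standing diffuseness convention.
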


Theorem \ref{th:tcl} and Corollary \ref{cor:tcl} naturally allow to build statistical tests for testing 
$
H_0: S^1=0 \quad \text{against} \quad  H_1: S^1\not=0.
$
One can note that Chatterjee \cite{Chatterjee2019} result  allows to test the independence of the input $X_1$ with respect to the output $Y$ which is a stronger assumption than $S^1=0$, this was for example studied in \cite{shi2020power}. In addition, our result allows to compute the power of the statistical test against any alternative of the kind  $H_{1,0}: S^1>s^1_0$ for any $s^1_0>0$.

\begin{rem}
A careful reading of the different steps of the proof shows that Theorem \ref{th:tcl} can be slightly extended to more general situations involving more than two successive order statistics and with more general second variable $(X_2,\ldots,X_p)$.
See the forthcoming paper \cite{GKL21}.
\end{rem}

The proof of our CLT is a bit long and technical and is postponed to the Appendix \ref{app:asymp_norm}. In a nutshell, this proof stands on three main ingredients. First, the regularity assumption on the function $f$ allows to expand the statistic under study as a quadratic functional of the two independent sequences of random variables. The quadratic part for the first sequence involves order statistics of the uniform distribution and may be linearized. The second ingredient is the distribution representation of uniform order statistics by ratios of exponential convolution. The third ingredient is less classical and involves a conditional trick to show a central limit theorem for an empirical mean of a product. Let sketch the idea on a simple example. Let $(\xi_n)_n$ and $(\delta_n)_n$ be two independent sequences of centered square integrable random variables. We set $M_n=n^{-1/2}\sum_{j=1}^{n}\xi_j\delta_j$ and let $\mathcal{T}$ be the $\sigma$-field generated by the sequence $(\delta_n)$. Of course, the classical CLT gives that $M_n$ converges in distribution towards a centered Gaussian distribution with variance $\Var(\xi_1)\Var(\delta_1)$. A less classical proof of this result consists in showing that, a.s., conditionally to $\mathcal{T}$ the same convergence in distribution holds. Indeed, this last result follows directly from the Lindeberg CLT and the strong law of large numbers for 
$n^{-1}\sum_{j=1}^n\delta_j^2$.  
    
\subsection{Comparison of the different estimation procedures}\label{ssec:comparison}

The estimator based on rank statistics $\xi_n^{\text{Sobol'}}(X_1,Y)$ defined in \eqref{eq:coeffChatterjee_version_Sobol'} can be compared to the classical Pick-Freeze estimators $S_n^1$  and $T_n^1$ given in \eqref{def:Sn} and \eqref{def:Tn} respectively (with $\mathbf u=\{1\}$) but also to a sequence of estimators involving the estimators $\widehat T_n$ introduced in \cite{da2008efficient}. 

\paragraph*{Required sample sizes}
With the rank-based procedure, a unique $n$-sample of input-output provides consistent and asymptotically normal estimations of the $p$ first-order Sobol' indices (together with consistent and asymptotically normal estimations of the $p$ first-order Cram\'er-von-Mises indices with no extra cost). In contrast, using the Pick-Freeze estimation, if one wants to estimate all the $p$ first-order Sobol' indices and the $p$ Cram\'er-von-Mises indices, $(p+2)n$ calls of the computer code are required.
The number of calls  grows linearly with respect to the number of input parameters. This is a practical issue for large input dimension domains.
A second drawback of the Pick-Freeze estimation scheme comes from the need of the particular Pick-Freeze design that is not always available.

\paragraph*{Limiting variances} Since the empirical mean and  variance are already known to be asymptotically efficient in the statistical sense\footnote{The reader is referred to \cite[Section 25]{van2000asymptotic} for the definition of the asymptotic efficiency and related results.} to estimate the expectation and the variance of the output, we restrict our study to the comparison of the limiting variances obtained via the Pick-Freeze and the rank-based procedures in the estimation of $\E[\E[Y|X_1]^2]$ only.

In view of the proof of \cite[Proposition 2.2]{janon2012asymptotic}, the Pick-Freeze limiting variance obtained using both $S_n^1$ and $T_n^1$ in estimating $\E[\E[Y|X_1]^2]=\E[YY^1]$ is simply given by
$\Var(YY^1)$,
where $Y^1=f(X_1,W^1)$ is the Pick-Freeze version of $Y=f(X_1,X_2,\ldots,X_p)=f(X_1,W)$.

Using the above Lemmas \ref{lem:B} and \ref{lem:C} together with \eqref{eq:CLT_somme} leads to the rank-based limiting variance obtained using $\xi_n^{\text{Sobol'}}(X_1,Y)$:
\begin{align}\label{eq:lim_var_rank}
\Sigma_B^{1,1}+\Sigma_C^{1,1}&=\E\left[\Var\left(YY^1|X_1\right) \right]
+ \E\left[\Cov\left(YY^1,YY^{11}|X_1\right)\right]-\E[(Y+Y^1)f_x(X_1,W) X_1]^2\nonumber\\
&\qquad +\E[(Y+Y^1)(\tilde Y+\tilde Y^1)f_x(X_1,W) f_x(\tilde X_1,\tilde W) (X_1\wedge \tilde X_1)],
\end{align}
where $Y=f(X_1,X_2,\ldots,X_p)=f(X_1,W)$, $Y^1=f(X_1,W^1)$, $Y^{11}=f(X_1,W^{11})$, $\tilde Y=f(\tilde X_1,\tilde W)$, and $\tilde Y^1=f(\tilde X_1,\tilde W^1)$ with $X_1$ and $\tilde X_1$ i.i.d., $W$, $\tilde W$, $W^1$, and $W^{11}$ i.i.d.\ also independent of $X_1$ and $\tilde X_1$. Note that $Y^1$ and $Y^{11}$ (respectively $\tilde Y^1$) are Pick-Freeze versions of $Y$ (resp. $\tilde Y$). The paragraph's aim is to compare the limiting variances obtained by the two methods (Pick-Freeze and rank-based).

To do so, we recall that the Pick-Freeze experiment requires $n(p+1)$ observations (or computations of the black-box code) to estimate the $p$ first-order Sobol' indices. In order to have a fair comparison of both estimation methods, we then consider that we  have $n(p+1)$ i.i.d.\ observations of $Y$ given by model \eqref{def:model} to estimate the $p$ first-order Sobol' indices using the rank statistics. With $n(p+1)$ observations instead of $n$, the asymptotic variance obtained using the rank-based methodology is divided by $(p+1)$, so that we want to 
compare
\[
V_{\text{PF}}\defeq (p+1)(\Var(YY^1) ,\ldots,\Var(YY^p) )^\top
\text{ 
to }
V_{\text{Rank}}\defeq (\Sigma_B^{1,1}+\Sigma_C^{1,1},\ldots, \Sigma_B^{p,p}+\Sigma_C^{p,p})^\top
\]
where $Y^i$ is the Pick-Freeze version of $Y$ with respect to $X_i$ (for $i=2,\ldots,p$) and $\Sigma_B^{i,i}+\Sigma_C^{i,i}$ has the same expression as $\Sigma_B^{1,1}+\Sigma_C^{1,1}$ in \eqref{eq:lim_var_rank} replacing the superscripts and the subscripts $1$ by $i$ (for $i=2,\ldots,p$).


\textbf{Example}. 
 We consider the following linear model
\begin{align}\label{def:linear_model}
Y=f(X_1,\ldots,X_p)=\alpha X_1 + X_2 + \ldots + X_p,
\end{align}
where $\alpha>0$ is a fixed constant, $X_1$, $X_2$, $\ldots$, and $X_p$ are $p$ independent and uniformly distributed random variables on $[0,1]$.

We denote by $m_{1,p}$ and $m_{2,p}$ the two first moments of $Z_p\defeq X_2+\ldots+X_p$ and $m_{1,p,\alpha}$ and $m_{2,p,\alpha}$ the two first moments of $Z_{p,\alpha}\defeq \alpha X_1+X_3+\ldots+X_p$. In addition, let $v_p$ and $v_{p,\alpha}$ be the variances of $Z_p$ of $Z_{p,\alpha}$. Hence $v_p=m_{2,p}-m_{1,p}^2$, $v_{p,\alpha}=m_{2,p,\alpha}-m_{1,p,\alpha}^2$,
\begin{align*}
&m_{1,p}=\frac{1}{2}(p-1), \quad 
m_{2,p}=\frac{1}{12} (p-1) (3p-2), \quad m_{1,p,\alpha}=\frac 12 (\alpha+m_{1,p-1})=\frac 12 (\alpha+p-2),\\
&m_{2,p,\alpha}=\frac 13 \alpha^2+ \alpha m_{1,p-1}+m_{2,p-1}=\frac 13 \alpha^2+\frac 12 (p-2)\alpha+\frac{1}{12}(p-2)(3p-5).
\end{align*}
By symmetry, after obvious computations,
one gets, for $i=2,\ldots,p$,
\begin{align*}
\Var(YY^1)&= \frac{4}{45}\alpha^4+\frac{1}{3}m_{1,p}\alpha^3+\frac{1}{3}\Bigl(2v_{p}+m_{1,p}^2\Bigr)\alpha^2+2m_{1,p}v_{p}\alpha+v_p (v_p+2m_{1,p}^2),\\
\Var(YY^i)&= \frac{4}{45}+\frac{1}{3}m_{1,p,\alpha}+\frac{1}{3}\Bigl(2v_{p,\alpha}+m_{1,p,\alpha}^2\Bigr)+2m_{1,p,\alpha}v_{p,\alpha}+v_{p,\alpha}(v_{p,\alpha}+2m_{1,p,\alpha}^2)
\end{align*}
while 
\begin{align*}
V_{\text{Rank}}^{1} & =  \frac{4}{45}\alpha^4+\frac{1}{3}m_{1,p}\alpha^3+\frac{1}{3}\Bigl(4v_p+m_{1,p}^2\Bigr)\alpha^2+4m_{1,p}v_p\alpha+v_p\Bigl(v_p+4m_{1,p}^2\Bigr),\\
V_{\text{Rank}}^{i} & = \frac{4}{45}+\frac{1}{3}m_{1,p,\alpha}+\frac{1}{3}\Bigl(4v_{p,\alpha}+m_{1,p,\alpha}^2\Bigr)+4m_{1,p,\alpha}v_{p,\alpha}+v_{p,\alpha}\Bigl(v_{p,\alpha}+4m_{1,p,\alpha}^2\Bigr).
\end{align*}

We compare these limiting variances in Figures \ref{fig:toy_comp} and \ref{fig:toy_diff}. The results are clear and illustrate the fact that the rank-based methodology works much better for all value of $p\geqslant 2$. In addition, the more the value of $p$ increases the greater the gain, as expected.

\begin{figure}[h!]
\includegraphics[scale=0.48]{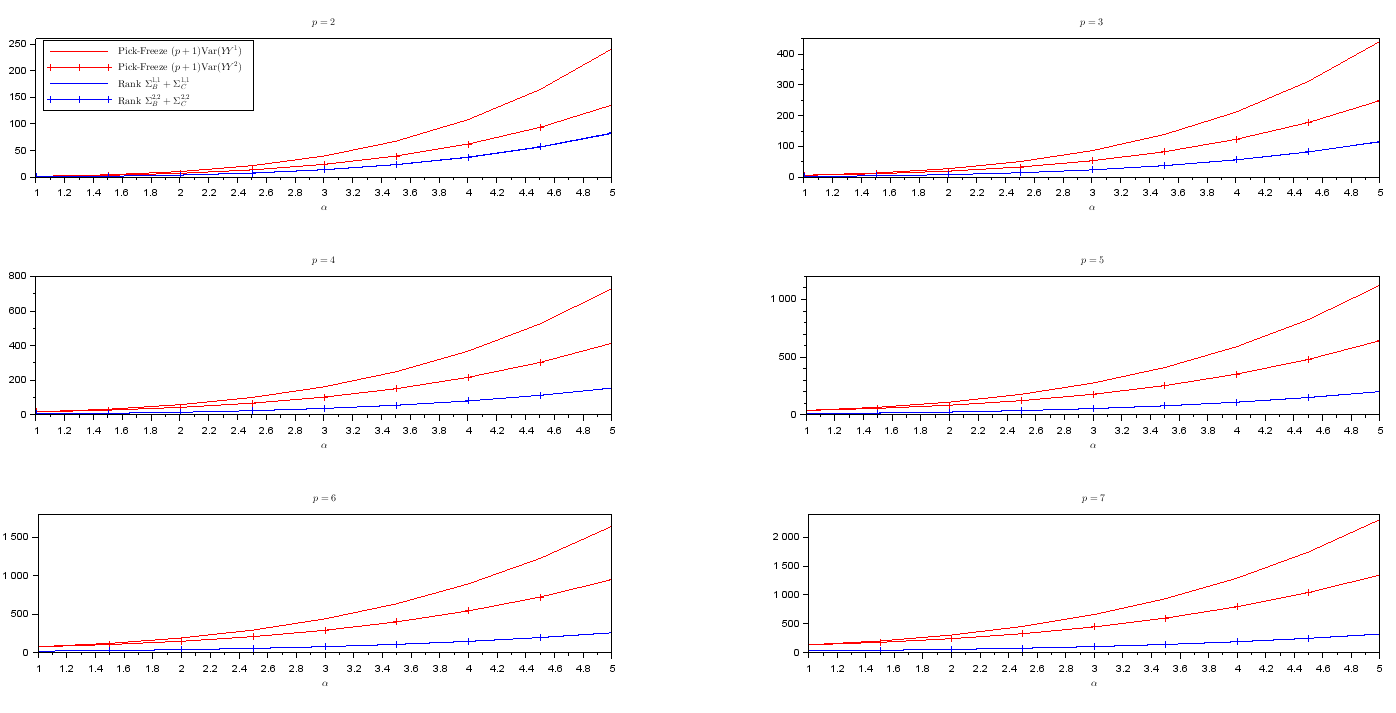}
\caption{Linear model defined in \eqref{def:linear_model}. The limiting variances with respect to $X_1$ (plain lines) and to $X_2$ (plain lines with +) are plotted. The rank-based estimation procedure is represented in blue while the Pick-Freeze estimation procedure is represented in red. As explained, the Pick-Freeze estimation procedure has been weighted by $(p+1)$ to have a fair comparison. The number of variables involved in the model varies from $p=2$ to $p=7$.}
\label{fig:toy_comp}
\end{figure}

\begin{figure}[h!]
\includegraphics[scale=0.26]{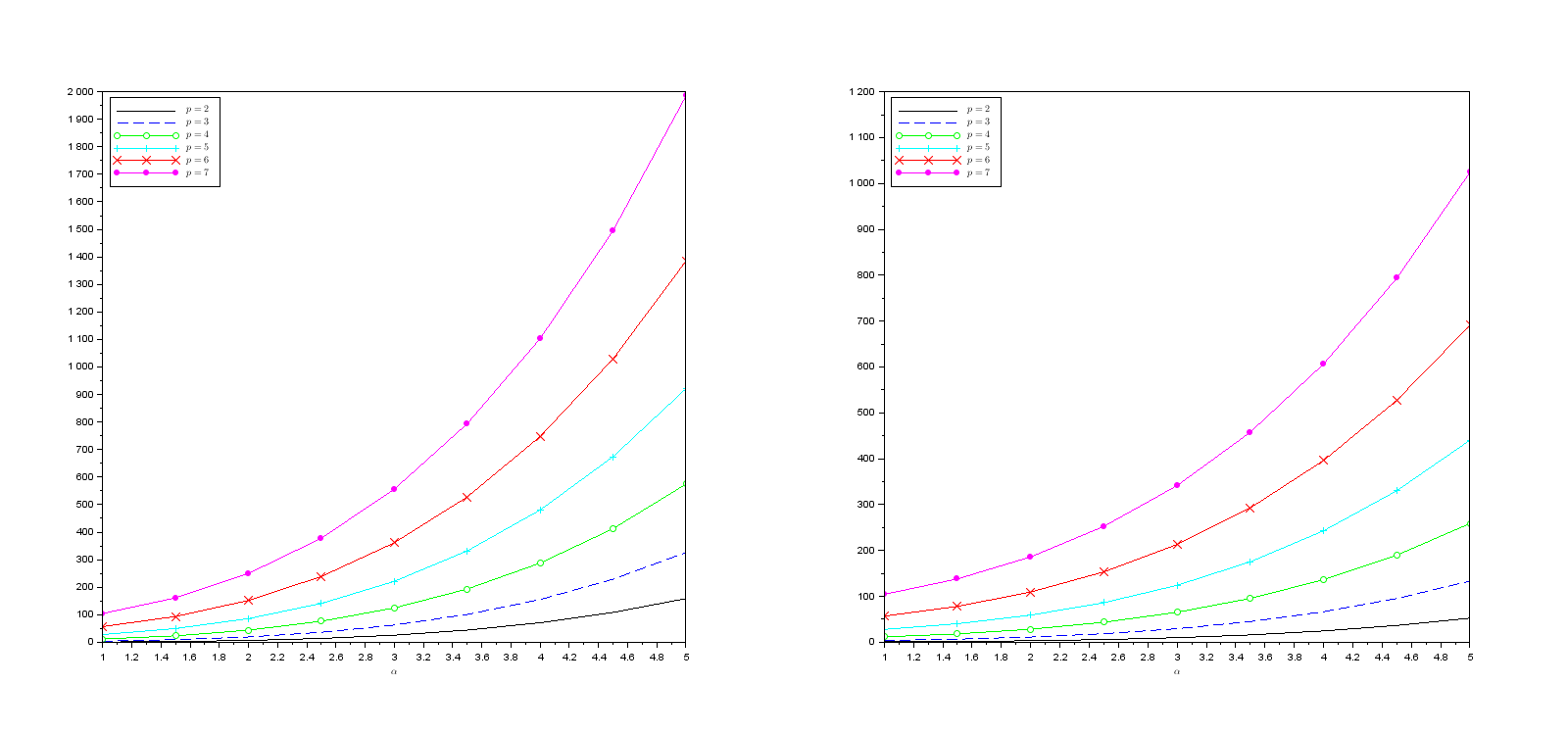}
\caption{Linear model defined in \eqref{def:linear_model}. The difference between the limiting variances with respect to $X_1$ (left panel) and to $X_2$ (right panel) are plotted. As explained, the Pick-Freeze estimation procedure has been weighted by $(p+1)$ to have a fair comparison. The number of variables involved in the model varies from $p=2$ to $p=7$.}
\label{fig:toy_diff}
\end{figure}

\begin{rem}
Observe that a more precise comparison should consists in comparing (via definite-positiveness) the limiting covariance-variance matrices involving both the limiting variances and the limiting covariances.
If it is straightforward to compute the covariance terms for the Pick-Freeze methodology: for $i=2,\ldots,p$, 
\begin{align*}
\Cov(YY^1,YY^i)&= \frac{1}{24}\alpha^4
+\frac{1}{12}m_{1,p-1}\alpha^3+\Bigl(\frac{7}{144}+\frac 14 v_{p-1}+
\frac 16\Bigl(m_{1,p-1}+\frac 12\Bigr)^2\Bigr)\alpha^2\\
&\qquad +\Bigl(\frac 18+\frac{1}{12}m_{1,p-1}+\frac 12 v_{p-1}+v_{p-1}m_{1,p-1}\Bigr)\alpha
+v_{p-1}\Bigl(m_{1,p-1}+\frac 12\Bigr)^2,
\end{align*}
it is much more tricky to deal with the rank-based procedure. Indeed, to do so a joint CLT is required for the vector of all $p$ first-order Sobol' indices whose proof is not a direct generalization of the proof of Theorem \ref{th:tcl}. Such an extension will be done in a forthcoming paper. 
\end{rem}

\paragraph*{Asymptotic efficiency}

The two previous procedures do not rely on the same design of experiment so that it is not possible to determine which one is the more efficient in the sense of \cite[Section 25]{van2000asymptotic}.

By \cite[Proposition 2.5]{pickfreeze}, the sequence of estimators $(T^1_n)_n$ is asymptotically efficient to estimate $S^1$ when the distribution $P$ of $(Y,Y^1)$ belongs to $\mathcal P$, the set of all c.d.f.\ of exchangeable random vectors in $L^2(\R^2)$.

Using a unique $n$-sample, one may compare the rank-based estimators introduced in this paper and the procedure involving the estimators $\widehat T_n$ defined in \cite[page 11]{da2008efficient}. 
Such estimator is particularly tricky to compute and not easily tractable in practice. More precisely, the initial $n$-sample is split into two samples of sizes $n_1$ and $n_2=n-n_1$. The first sample is dedicated  to the estimation of the joint density of $(X,Y)$ while the second one is used to compute a Monte-Carlo estimation of the integral involved in the quantity of interest. 
In a work under progress \cite{GKLPdV21}, another estimator based on kernels and the same design of experiment is proposed. This estimator is more tractable in practice. 

By \cite[Theorems 3.4 and 3.5]{da2008efficient}, the sequence of estimators $(\widehat T_n)_n$ is asymptotically efficient to estimate $\E[\E[Y\vert X]^2]$ leading to an asymptotically efficient sequence of estimators of $S^1$. The proof of the following proposition has been postponed in Appendix \ref{app:ae}.

\begin{prop}\label{prop:ae}
Consider the sequence of estimators $\widehat T_n$ introduced in \cite[page 11]{da2008efficient}. Assume that the joint distribution $P$ of $(X,Y)$ is absolutely continuous with respect to the product probability $P_X\otimes P_Y$, namely $P(dx,dy)=f(x,y)P_X(dx) P_Y(dy)$. Then the sequence $(R^1_n)_n$
\begin{align*}
R^1_n=\frac{\widehat T_n - \left(\frac{1}{n} \sum_{i=1}^n Y_i\right)^2}{\frac{1}{n} \sum_{i=1}^n Y_i^2-\left(\frac{1}{n} \sum_{i=1}^n Y_i\right)^2}
\end{align*}  
is asymptotically efficient in estimating $S^1$. In addition, its (minimal) variance $\sigma_{\min}^2$ is 
\begin{align*}
\sigma_{\min}^2\defeq \frac{1}{\Var(Y)^2}\Var\left(2\E[Y](1-S^1)Y+S^1Y^2+\E[Y\vert X](\E[Y\vert X]-2Y)\right).
\end{align*}
\end{prop}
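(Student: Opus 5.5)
The argument is the functional delta method applied to the efficient estimator of da Veiga and Gamboa, combined with the fact that smooth functionals of efficient estimators remain efficient (van der Vaart, Theorem 25.47 and Lemma 25.25).

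\textbf{Step 1: the three building blocks are jointly efficient.} By \cite[Theorems 3.4 and 3.5]{da2008efficient}, under the stated absolute continuity assumption, $\widehat T_n$ is asymptotically efficient for $\theta_3(P)\defeq\E[\E[Y\vert X]^2]$. Its efficient influence function is
\[
\tilde\psi_3(x,y)=2y\,m(x)-m(x)^2-\theta_3,\qquad m(x)=\E[Y\vert X=x].
\]
The empirical means $\frac1n\sum Y_i$ and $\frac1n\sum Y_i^2$ are efficient for $\theta_1=\E[Y]$ and $\theta_2=\E[Y^2]$, since they are linear functionals of $P$ in the nonparametric model. Their influence functions are $y-\theta_1$ and $y^2-\theta_2$.

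Each of the three estimators is asymptotically linear with its efficient influence function. Their stacked sum is therefore asymptotically linear with the stacked influence function. Hence the vector $(\frac1n\sum Y_i,\frac1n\sum Y_i^2,\widehat T_n)$ is regular and asymptotically efficient for $(\theta_1,\theta_2,\theta_3)$, by \cite[Lemma 25.23]{van2000asymptotic}.

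\textbf{Step 2: delta method.} We have $S^1=\phi(\theta_1,\theta_2,\theta_3)$ with
\[
\phi(a,b,c)=\frac{c-a^2}{b-a^2}.
\]
The map $\phi$ is differentiable wherever $b-a^2=\Var(Y)>0$. By the delta method for efficient estimators \cite[Theorem 25.47]{van2000asymptotic}, $R_n^1=\phi(\cdot)$ is efficient for $S^1$. Its influence function is $\nabla\phi\cdot\tilde\psi$.

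\textbf{Step 3: compute the variance.} Write $V=\Var(Y)$. The partial derivatives are
\[
\partial_c\phi=\frac1V,\qquad \partial_b\phi=-\frac{S^1}{V},\qquad \partial_a\phi=\frac{-2\theta_1+2\theta_1S^1}{V}=-\frac{2\E[Y](1-S^1)}{V}.
\]
The influence function is then
\[
\frac1V\Bigl(-2\E[Y](1-S^1)(y-\theta_1)-S^1(y^2-\theta_2)+2y\,m(x)-m(x)^2-\theta_3\Bigr).
\]
Dropping constants, its variance is
\[
\frac{1}{V^2}\Var\bigl(-2\E[Y](1-S^1)Y-S^1Y^2+m(X)(2Y-m(X))\bigr).
\]
Multiplying the argument by $-1$ does not change the variance. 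This gives exactly $\sigma_{\min}^2$ as stated.

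\textbf{Main obstacle.} The delicate step is Step 1. There I must check that da Veiga--Gamboa's $\widehat T_n$ is not only efficient but asymptotically linear with influence function $\tilde\psi_3$, since joint efficiency of the stacked vector needs exactly this. I would extract the linear expansion from their proof: the sample split and the plug-in density estimator give a remainder that is $o_P(n^{-1/2})$. I would then verify that the absolute continuity assumption makes the tangent space the full nonparametric one, so that these influence functions are indeed efficient.
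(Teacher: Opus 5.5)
Your proposal is correct and follows the paper's proof. Componentwise efficiency of $(\widehat T_n,\frac1n\sum Y_i,\frac1n\sum Y_i^2)$ is upgraded to joint efficiency; the paper cites van der Vaart's Theorem~25.50, whose content is exactly your stacking of efficient influence functions via Lemma~25.23. Theorem~25.47 then carries efficiency through $(a,b,c)\mapsto (c-a^2)/(b-a^2)$, and the variance comes from the gradient applied to the influence function $\E[Y\vert X](2Y-\E[Y\vert X])$, which the paper derives by differentiating along $P_t=(1+th)P$ rather than quoting it. The asymptotic linearity you flag as the main obstacle is automatic from the ``only if'' direction of Lemma~25.23 once $\widehat T_n$ is efficient in van der Vaart's sense, and this is exactly what the paper takes from da Veiga--Gamboa.
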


Thus we are interested in the comparison of $\sigma_{\min}^2$ and $\sigma^2$ given in Theorem \ref{th:tcl}. Let us consider again the example of the linear model \eqref{def:linear_model} introduced in the previous paragraph.

\textbf{Example (continued)}. 
 We consider the model defined in \eqref{def:linear_model}.
As done in the previous paragraph, we only compare $V^1_{\text{Eff}}\defeq \Var(\E[Y\vert X_1](2Y-\E[Y\vert  X_1]))$ to $\Sigma_B^{1,1}+\Sigma_C^{1,1}$ and $V^i_{\text{Eff}}\defeq \Var(\E[Y\vert X_i](2Y-\E[Y\vert  X_i]))$ to $\Sigma_B^{i,i}+\Sigma_C^{i,i}$ for $i=2,\ldots,p$. After some trivial computations, one gets 
\begin{align*}
V^1_{\text{Eff}}&=\frac{4}{45}\alpha^4+\frac{1}{3}m_{1,p}\alpha^3+\frac{1}{3}\Bigl(4v_p+m_{1,p}^2\Bigr)\alpha^2+4m_{1,p}v_p\alpha+4 v_p m_{1,p}^2,\\
V^i_{\text{Eff}}&=\frac{4}{45}+\frac{1}{3}m_{1,p,\alpha}+\frac{1}{3}\Bigl(4v_{p,\alpha}+m_{1,p,\alpha}^2\Bigr)+4m_{1,p,\alpha}v_{p,\alpha}+4 v_{p,\alpha} m_{1,p,\alpha}^2.
\end{align*}

We compare these limiting variances in Figure \ref{fig:toy_eff}. We observe that the limiting variances obtained with the rank methodology do not differ much from the efficient variances. 

\begin{figure}[h!]
\includegraphics[scale=0.48]{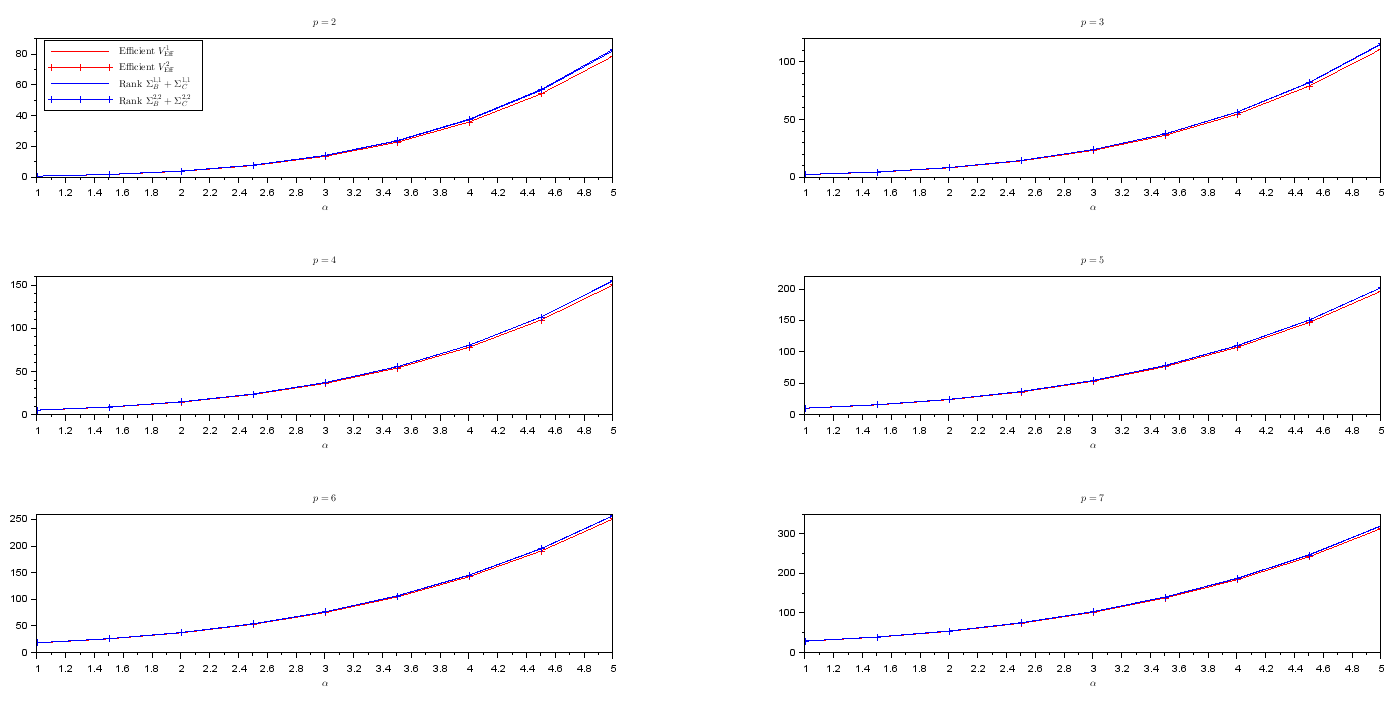}
\caption{Linear model defined in \eqref{def:linear_model}. The limiting variances with respect to $X_1$ (plain lines) and to $X_2$ (plain lines with +) are plotted. The rank-based estimation procedure is represented in blue while the efficient variances are represented in red. The number of variables involved in the model varies from $p=2$ to $p=7$.}
\label{fig:toy_eff}
\end{figure}

\subsection{Recovering other classical indices}\label{sec:known_indices}

In \cite{FGM2017}, the authors considered computer codes of the form  \eqref{def:model} valued on a compact Riemannian manifold. In this framework, they proposed a sensitivity index in the flavour of the Cramé-von-Mises index and they used the Pick-Freeze scheme to provide a consistent estimator. The authors of \cite{GKLM19} extend the previous indices to the context of general metric spaces and propose U-statistics-based estimators improving the classical Pick-Freeze procedure. In light of Section \ref{sec:gene}, one may introduce a novel estimation of the indices introduced in \cite{GKLM19} requiring a unique $n$-sample. The reader is referred to \cite{FKL21} for more details on the procedure.

Following \cite{Owen12,ODC13}, extensions to Sobol' indices are obtained by replacing  their numerator  by  higher-order moments. 
In \cite{GKL18}, the authors construct a Pick-Freeze  estimator for such extensions. One again, we are now able to propose another estimation 
scheme based on a unique $n$-sample. The reader is referred to \cite{GKL21} for the generalization of Lemma \ref{lem:pfChatt} and the corresponding asymptotic study.

\section{Numerical experiments}\label{sec:exnum}

\subsection{Numerical comparison on the Sobol' \texorpdfstring{$g$}{f}-function: conventional Pick-Freeze estimators vs rank estimators}

In this section, we compare the performances of both estimation procedures on an  analytic function: the so-called  Sobol' $g$-function, that is defined by
\begin{equation}\label{eq:gfct}
g(X_1,\ldots,X_{p})=\prod_{i=1}^{p} \frac{\abs{4X_i-2}+a_i}{1+a_i},
\end{equation}  
where $(a_i)_{i\in \N}$ is a sequence of real numbers and the $X_i$'s are i.i.d.\  random variables uniformly distributed on $[0,1]$. In this setting, one may easily compute the exact expression of the first-order Sobol' indices: 
\[
S^i=\frac{(1+a_i^2)^{-1}/3}{3^{-p}\prod_{i=1}^p (1+a_i^2)^{-1}-1}.
\]
As expected, the lower the
coefficient $a_i$, the more significant the variable $X_i$. In the sequel, we simply fix $a_i=i$. 
Due to its complexity (non-linear and non-monotonic correlations) and the analytical expression of the Sobol' indices, the Sobol' $g$-function is a classical test example commonly used in GSA (see e.g.\ \cite{saltelli-sensitivity}).

\paragraph*{Convergence as the sample size increases} 
In Figure \ref{fig:jouet_N_increases}, we compare the estimations of the six first-order Sobol' indices given by both methods ($p=6$). In the Pick-Freeze estimations given by \eqref{def:Tn}, several sizes of sample $N$ have been considered: $N=100$, 500, 1000, 5000, 10000, 50000, 100000, and 500000. The Pick-Freeze procedure requires $(p+1)=7$ samples of size $N$. To have a fair comparison, the sample sizes considered in the estimation of $\xi_n^{\text{Sobol'}}$ are $n=(p+1)N=7 N$. Both methods converge and give precise results for large sample sizes. 

\begin{figure}[h!]
\includegraphics[scale=0.48]{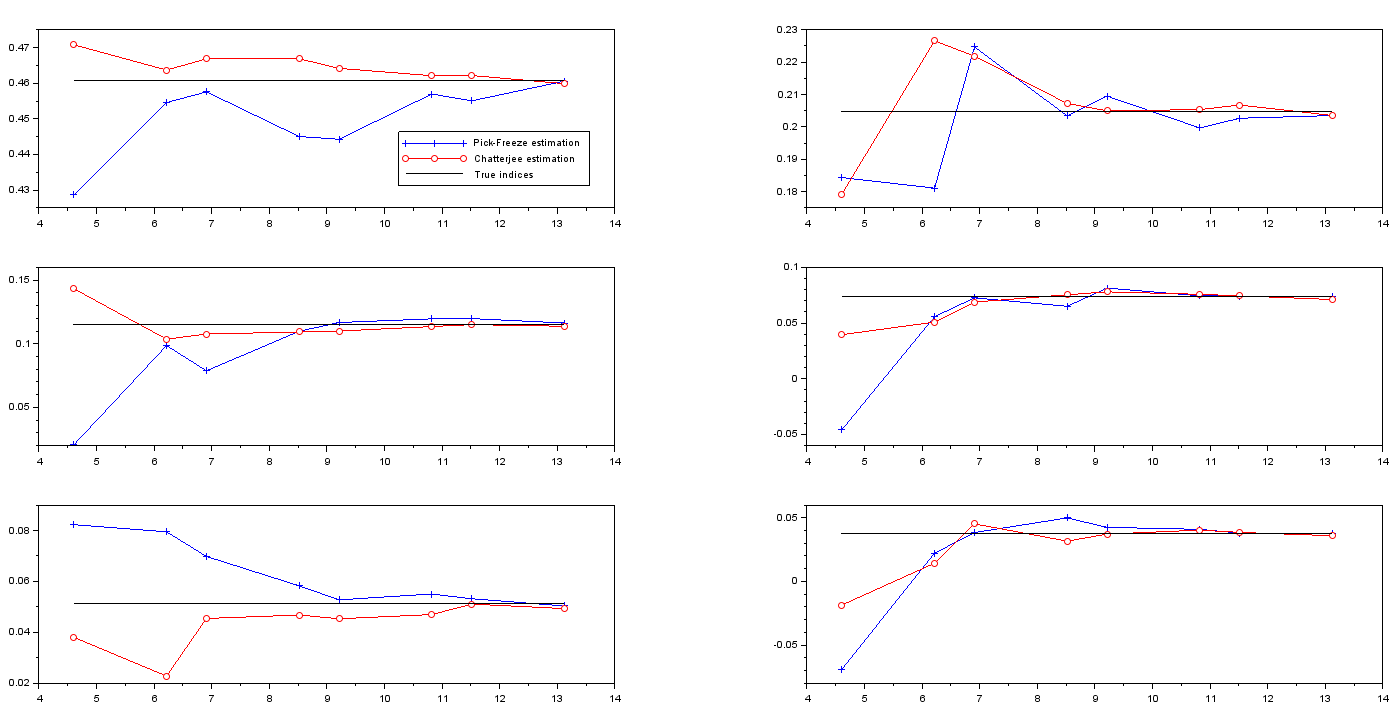}
\caption{The Sobol' $g$-function model \eqref{eq:gfct}. Convergence of both methods when $N$ increases. The sixth first-order Sobol' indices have been represented from left  to right and up to bottom. Several sample sizes have been considered: $N=100$, 500, 1000, 5000, 10000, 50000, 100000, and 500000 for the Pick-Freeze estimation procedure (in blue) and correspondingly $(p+1)N$ for the rank estimation procedure (in red). The true indices are displayed in black plain line. The $x$-axis is in log. scale. }
\label{fig:jouet_N_increases}
\end{figure}

\paragraph*{Comparison of the mean square errors}
We now compare the efficiency of both methods at a fixed sample size. In that view, we assume that only $n=700$ calls of the computer code $f$ are allowed to estimate the six first-order Sobol' indices. We repeat the estimation procedure 500 times. The boxplot of the mean square errors for the estimation of the first-order Sobol' index $S^1$ with respect to $X_1$ has been represented in Figure \ref{fig:boxplot}. We observe that, for a fixed sample size $n=700$ (corresponding to a Pick-Freeze sample size $N=100$), the rank estimation procedure performs much better than the Pick-Freeze method with significantly lower mean errors. The same behavior can be observed for all the first Sobol' indices as can be seen in Table \ref{tab:modlin} that provides some characteristics of the mean squares errors.

\begin{figure}
\begin{tabular}{cc}
\includegraphics[scale=.50]{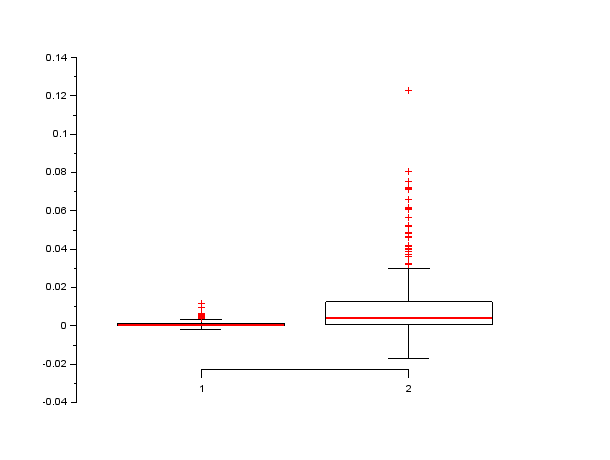}
\end{tabular}
\caption{The Sobol' $g$-function model \eqref{eq:gfct}. Boxplot of the mean square errors of the estimation of $S^1$ with a fixed sample size and 500 replications. The results of the rank methodology with $n=700$ are provided in the left panel. The results of the Pick-Freeze estimation procedure with $N=100$ are provided in the right panel.}
\label{fig:boxplot}
\end{figure}

\begin{table}[ht]
\begin{small}
\begin{tabular}{c|c|c|c|c|c|c}
\cline{2-7}
\multicolumn{1}{c}{} &\multicolumn{3}{c|}{Pick-Freeze}&\multicolumn{3}{c}{Rank}\\
\cline{2-7}
\multicolumn{1}{c}{} & Mean & Median & Stdev & Mean & Median & Stdev \\
\hline
mse $S^1$ & 0.0095548 & 0.0039458 & 0.0145033 & 0.0010218 & 0.0004498 & 0.0013999\\ 
mse $S^2$ & 0.0105727 & 0.0046104 & 0.0148873 & 0.0017314 & 0.0006870 & 0.0027436\\
mse $S^3$ & 0.0101785 & 0.0041789 & 0.0143846 & 0.0016667 & 0.0006409 & 0.0024392\\
mse $S^4$ & 0.0105463 & 0.0047284 & 0.0178064 & 0.0018522 & 0.0008126 & 0.0025296\\  
mse $S^5$ & 0.0097979 & 0.0042995 & 0.0135533 & 0.0016285 & 0.0006855 & 0.0024264\\  
mse $S^6$ &   0.0096109 & 0.0046822 & 0.0134822 & 0.0015590 & 0.0007080  & 0.0021333\\ 
\hline
\end{tabular}
\caption{\label{tab:modlin}The Sobol' $g$-function model \eqref{eq:gfct}. Characteristics of the mean square errors for the estimation of the six first-order Sobol' indices with a fixed sample size and 500 replications. In the rank methodology, the sample size is $n=700$ while in the Pick-Freeze estimation procedure, it is $N=100$.}
\end{small}
\end{table}

\paragraph*{Performances for small sample sizes or for large number of input variables} 
As expected, we can observe in Table \ref{tab:modlin2_mse} that the rank estimation procedure proceeds much better than the Pick-Freeze methodology for small sample sizes. Similarly, if the number of input variables increases drastically, we can observe the same behavior as can be seen in Figure \ref{fig:var}. In that case, we consider the model \eqref{eq:gfct}
for several values of $p$: 6, 10, 15, 20, 30, 40, and $50$.

\begin{table}[ht]
\begin{small}
\begin{tabular}{c|c|c|c|c|c|c}
\cline{2-7}
\multicolumn{1}{c}{} &\multicolumn{3}{c|}{Pick-Freeze}&\multicolumn{3}{c}{Rank}\\
\cline{2-7}
\multicolumn{1}{c}{} & $N=10$ & $N=50$ & $N=100$ & $n=70$ & $n=350$ & $n=700$ \\
\hline
mse $S^1$ & 0.1128686 & 0.0172275 & 0.0095548 & 0.0116790 & 0.0022941 & 0.0010218  \\
mse $S^2$ & 0.1509575 & 0.0223196 & 0.0105727 & 0.0177522 & 0.0033719 & 0.0017314\\  
mse $S^3$ & 0.1469124 & 0.0220015 & 0.0101785 & 0.0175517 & 0.0032474 & 0.0016667\\  
mse $S^4$ & 0.1591130 & 0.0196357 & 0.0105463 & 0.0159360 & 0.0033948 & 0.0018522\\  
mse $S^5$ & 0.1646339 & 0.0240353 & 0.0097979 & 0.0158563 & 0.0032230 & 0.0016285\\  
mse $S^6$ & 0.1466408 & 0.0217638 & 0.0096109 & 0.0166701 & 0.0029653 & 0.0015590\\ 
\hline
\end{tabular}
\caption{\label{tab:modlin2_mse} The Sobol' $g$-function model \eqref{eq:gfct}. Mean squares errors of the estimation of the six first-order Sobol' indices with small sample sizes and with both methods.}
\end{small}
\end{table}

\begin{figure}[ht]
\begin{tabular}{cc}
\includegraphics[scale=.48]{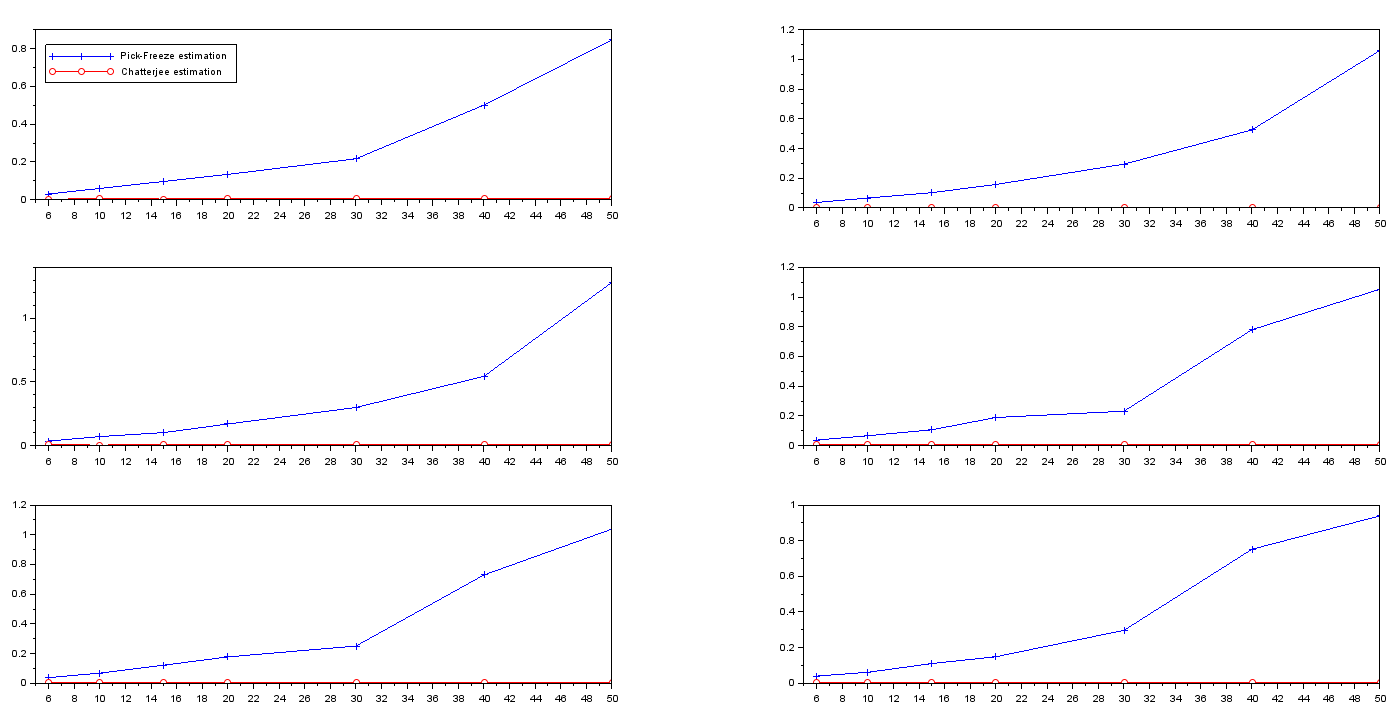}
\end{tabular}
\caption{The Sobol' $g$-function model \eqref{eq:gfct}. Mean square errors of the estimation of the six first-order Sobol' indices with respect to the number of input variables with a fixed sample size and 500 replications. We consider the sample sizes $n=200$ in the rank methodology (in red) and  $N=n/(p+1)$ in the Pick-Freeze procedure (in blue). The number of input variables considered are $p=6,10,15,20,30,40$, and $50$.}
\label{fig:var}
\end{figure}

\subsection{An application in biology}\label{ssec:biolo}

Here, we illustrate the nature and the performance of the Cram\'er-von-Mises indices and their corresponding rank estimators as a screening mechanism for high-dimensional problems. To do so, 
we consider the neurovascular coupling model from \cite{hgd}. Mathematically, this corresponds to the following differential-algebraic equation (DAE) system
\begin{align}
\frac{dW}{dt} &= G(W,Z,X), \quad
0 = H(W,Z,X), \label{algnvu}
\end{align}
where $W = (W_1, \dots, W_N)$ and $Z = (Z_1, \dots, Z_M)$ correspond respectively to the differential and algebraic state variables of the models. The variables $X = (X_1 , \dots, X_p)$ correspond to the uncertain parameters of the model. Our quantity of interest corresponds to the time average over $[0,T]$ of $W^*$ (which is one of the differential state variables $W_1$, ..., $W_N$), i.e.
\begin{align}
Y = \frac 1T \int_0^\top W^*(t) \, dt. \label{QoI}
\end{align}
As above, we regard $Y$ as a function of the unknown parameters, i.e., $Y = f(X_1, \dots, X_p)$. In our implementation, the values of $W^*$ are obtained by solving the above DAE system (Equation \eqref{algnvu}) by the MATLAB routine ode15s (it  can be checked that  \eqref{algnvu} form an index one system). Further,  in the current example, $N=67$ and $p = 160$ and the distributions of most of the $X_i$'s are uniform and allowed to vary $\pm 10\%$ from nominal values (see \cite{hgd} for additional details). 

We compare the results from the rank estimators as described above to those resulting from the  linear regression
\[
f(X_1, \dots , X_{160}) \approx \lambda _0 + \sum_{j=1}^{160} \lambda _j X_j. 
\]
As shown in \cite{hgd}, the above approximation performs well for the considered QoI. We assign to each variable $X_1, \dots, X_{160}$  a relative importance $L_j$ where
\[
L_j = \frac{|\lambda_j|}{\sum_{\ell=1}^{160} |\lambda_{\ell}|}, \qquad j = 1, \dots, 160.
\]

Figure~\ref{fig:nvu} displays the results. Both screening approaches identify the same to three influential parameters. More parameters are identified as being non-influential through the linear regression approach than using the Cram\'er-von-Mises indices.

\begin{figure}[h!]
\centering
\begin{tabular}{cc}
\includegraphics[width=0.7\textwidth]{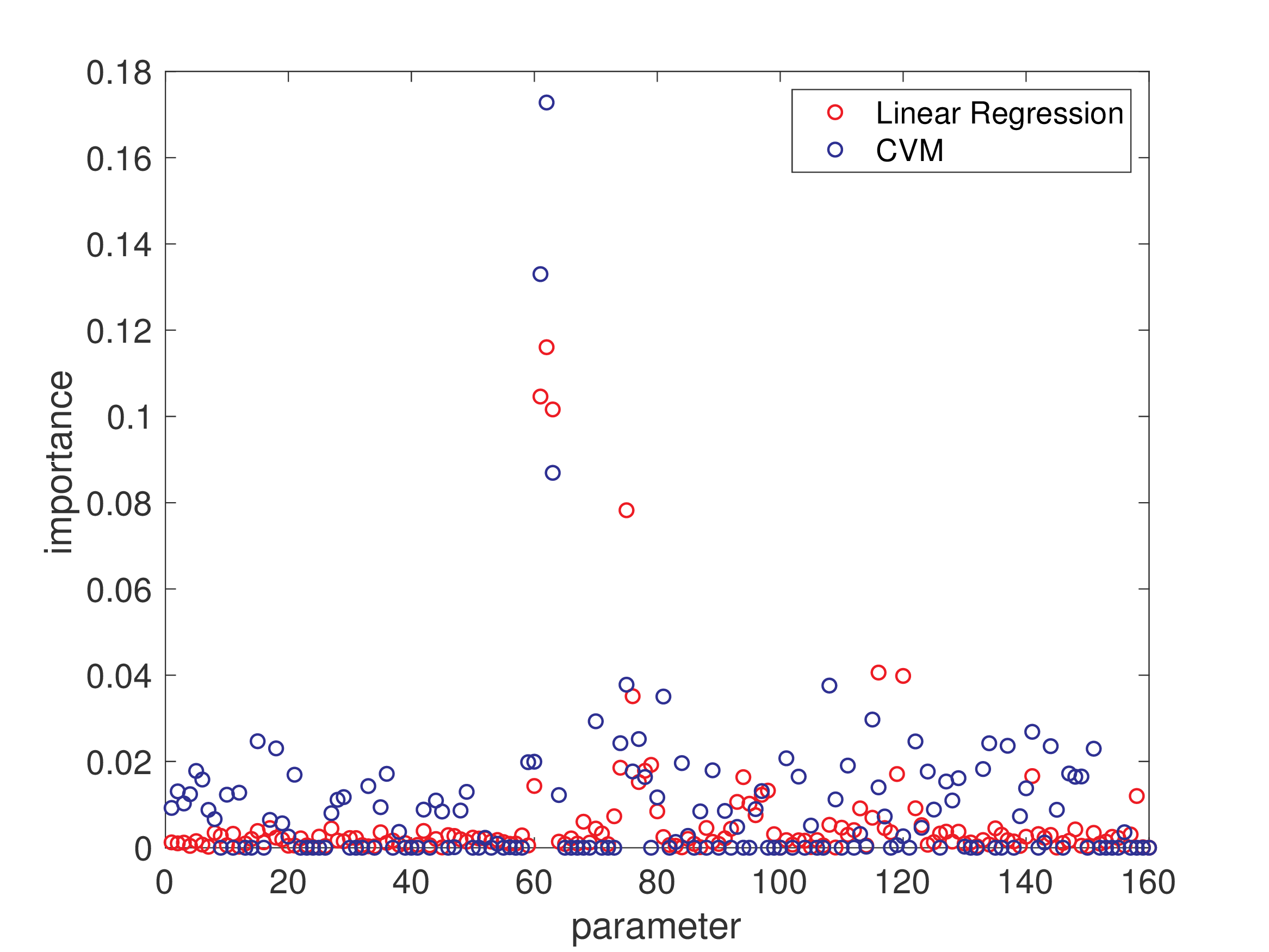}
\end{tabular}
\caption{Rank estimators corresponding to the Cram\'er-von-Mises indices as a screening mechanics for the DAE system given by \eqref{algnvu} and \eqref{algnvu}.}
\label{fig:nvu}
\end{figure}

\section{Conclusion}\label{sec:conclu}

In this paper, we explain how to use the estimator proposed by Chatterjee in \cite{Chatterjee2019}  to provide a very nice and mighty procedure to estimate both all the first-order Sobol' indices and the so-called Cram\'er-von-Mises indices \cite{GKL18} at a small cost (only $n$ calls of the computer code). We emphasize on the fact that this estimation procedure requires a unique sample contrary to the Pick-Freeze procedure based on a particular design of experiment, the size of which is 2n when estimating a single index and increases with the number of indices to estimate. 
We also extend Chatterjee's method to estimate more general quantities. 
Furthermore, we show a CLT for our estimations of Sobol' indices.
As examples,
we consider two indices already introduced in sensitivity analysis: the indices adapted to output valued in general metric spaces defined in \cite{GKLM19} and the higher-moment indices \cite{Owen12,ODC13}. A general CLT will be established soon in \cite{GKL21}.

\textbf{Acknowledgment}. We warmly thank Robin Morillo for the numerical study provided in Section \ref{ssec:biolo}. Moreover, we deeply thank the anonymous referee of the early version of our paper who pushed us to prove the CLT. We also gratefully thank the anonymous reviewer of the current version of this paper for his comments, critics and advises, which
greatly helped us to improve the manuscript. 

Support from the ANR-3IA Artificial and Natural Intelligence Toulouse Institute is gratefully acknowledged. This work was also supported by the National Science Foundation under grant DMS-1745654.

\appendix

\section{Proof of the consistency}\label{app:cons}

\begin{proof}[Proof of Lemma \ref{lem:pfChatt}]
Since $\tau_n$ has no fix point, and using the measurability of $\tau_n$ and the independence, we have
\begin{align*}
\E&\left[g(Y_j)h(Y_{\tau_n(j)})|\mathcal F_n\right]
=\E\Bigl[g(Y_j)\sum_{\substack{l=1,\\l\neq j}}^n h(Y_{l})\ind_{\{\tau_n(j)=l\}}|\mathcal F_n\Bigr]
=\sum_{\substack{l=1,\\l\neq j}}^n\ind_{\{\tau_n(j)=l\}}\E\Bigl[g(Y_j) h(Y_{l})|\mathcal F_n\Bigr]\\
&=\sum_{\substack{l=1,\\l\neq j}}^n\ind_{\{\tau_n(j)=l\}}\E\Bigl[g(Y_j) |\mathcal F_n\Bigr] \E \Bigl[h(Y_{l})|\mathcal F_n\Bigr] =\E\bigl[g(Y_j) |V_j\bigr]\sum_{\substack{l=1,\\l\neq j}}^n\ind_{\{\tau_n(j)=l\}} \E \bigl[h(Y_{l})|V_l\bigr]\\
&=\Psi_{V_j}(g)\sum_{\substack{l=1,\\l\neq j}}^n\ind_{\{\tau_n(j)=l\}} \Psi_{V_l}(h)
=\Psi_{V_j}(g)\Psi_{V_{\tau_n(j)}}(h). \qedhere
\end{align*}
\end{proof}

\begin{proof}[Proof of Proposition \ref{prop:cv}]
We follow the steps of the proof of Corollary 7.12 in \cite{Chatterjee2019}. Our proof is significantly simpler since $\tau_n$ is assumed to have no fix points and $V$ is continuous so that there are no ties in the sample. To simplify the notation, we  denote $\chi_n(V,Y;g,h)$ and $\chi(V,Y;g,h)$ by $\chi_n$ and $\chi$ respectively. 

We first prove that, for any measurable function $\varphi$, 
\begin{align}\label{eq:cv_lusin}
\varphi(V_1)-\varphi(V_{\tau_n(1)})\to 0
\end{align} in probability  as $n\to \infty$. Let $\varepsilon>0$. By the special case of Lusin's theorem (see \cite[Lemma 7.5]{Chatterjee2019}), there exists a compactly supported continuous function $\tilde \varphi\colon \R\to\R$ such that $\P(\{x;\, \varphi(x)\neq \tilde \varphi(x)\})<\varepsilon$, where $\P$ stands for the distribution of $V$. 
Then for any $\delta>0$, 
\begin{align}
\P\Big(\abs{\varphi(V_1)-\varphi(V_{\tau_n(1)})}&>\delta\Big)
\leqslant \P\left(\abs{\tilde \varphi(V_1)-\tilde \varphi(V_{\tau_n(1)})}>\delta\right)\nonumber\\
&+\P\left(\varphi(V_1)\neq \tilde \varphi(V_1))+\P(\varphi(V_{\tau_n(1)})\neq \tilde \varphi(V_{\tau_n(1)})\right).\label{ineq:cv_lusin}
\end{align}
By continuity of $\tilde \varphi$ and since $V_{\tau_n(1)}\to V_1$ as $n\to \infty$ with probability one, the first term in the right hand side of \eqref{ineq:cv_lusin} converges to 0 as $n\to\infty$. 
By construction of $\tilde \varphi$, the second term is lower than $\varepsilon$. Turning to the third one, we have thus 
\begin{align*}
\E&[\varphi(V_{\tau_n(1)})]=\frac 1n \sum_{j=1}^n \E[\varphi(V_{\tau_n(j)})] =  \frac 1n \sum_{j=1}^n \sum_{\substack{l=1\\ l\neq j}}^n \E[\varphi(V_l)\ind_{\{\tau_n(j)=l\}}]\\
&=  \frac 1n \sum_{l=1}^n \sum_{\substack{j=1\\ j\neq l}}^n \E[\varphi(V_l)\ind_{\{\tau_n(j)=l\}}] =  \frac 1n \sum_{l=1}^n \E[\varphi(V_l)  \sum_{\substack{j=1\\ j\neq l}}^n\ind_{\{\tau_n(j)=l\}}]=  \frac 1n \sum_{l=1}^n \E[\varphi(V_l)] =\E[\varphi(V_1)]
\end{align*}
where we have used the fact that  $\tau_n$ has no fix point, $V_{\tau_n(i)}\overset{\mathcal L}{=} V_{\tau_n(j)}$ for any $i$ and $j=1,\dots,n$, and the $V_i$'s have no ties. This yields
\begin{align*}
\P(\varphi(V_{\tau_n(1)})\neq \tilde \varphi(V_{\tau_n(1)}))=\P(\varphi(V_1)\neq \tilde \varphi(V_1))<\varepsilon,
\end{align*}
and, since $\varepsilon$ and $\delta$ are arbitrary, \eqref{eq:cv_lusin} is therefore proved. 
Now, since $x\mapsto \Psi_x$ is a measurable and bounded function and applying \eqref{eq:cv_lusin}, we  have
\begin{align}
\left\{\begin{array}{ll} \Psi_{V_1}(g)-\Psi_{V_{\tau_n(1)}}(g) &\to 0, \\ 
\Psi_{V_1}(h)-\Psi_{V_{\tau_n(1)}}(h) &\to 0,\end{array}
\right. \quad\mbox{in probability as $n\to \infty$.}  \label{eq:cv_psi}
\end{align}
Lemma \ref{lem:pfChatt} and the dominated convergence theorem lead to
\begin{footnotesize}
\begin{align}
\E[\chi_n] &= \frac 1n \sum_{j=1}^n \E[g(Y_j)h(Y_{\tau_n(j)})]
= \E[g(Y_1)h(Y_{\tau_n(1)})]
=  \E[\Psi_{V_1}(g)\Psi_{V_{\tau_n(1)}}(h)]
\to \E[\Psi_{V}(g)\Psi_{V}(h)]= \chi\label{eq:cv_espe}
\end{align}
\end{footnotesize}
where we have taken into account the fact that  $\Psi_V(g)$ and $\Psi_V(h)$ are bounded (due to the boundedness of $g$ and $h$) and used \eqref{eq:cv_psi}.

The last step of the proof consists in comparing $\chi_n$  with $\E[\chi_n] $ using Mc Diarmid's concentration inequality  \cite{mcdiarmid1989method}.
Sharper constants can be obtained in Mc Diarmid's inequality by using the inequalities from \cite{boucheron2013concentration,boucheron2009concentration}. 
As we are interested in asymptotic results the accuracy of the constant has no impact on the result. 
Following the same lines as in the proof of \cite[Lemma 7.11]{Chatterjee2019}, Mc Diarmid's concentration inequality  in \cite{mcdiarmid1989method} then implies
\begin{align}\label{eq:concentration}
\P(\abs{\chi_n-\E[\chi_n]}\geqslant t)\leqslant 2\exp \{-2n^2t^2/C^2\},
\end{align}
where $C$ is a universal constant and we  conclude the proof by  combining \eqref{eq:cv_espe} and \eqref{eq:concentration}.
\end{proof}

\section{Proof of the asymtotic normality 
}\label{app:asymp_norm}

%
%

\textbf{Framework and goal} 
We consider  the model defined in \eqref{def:model} that can be rewritten as 
$Y=f(X,W)$
where $X=X_1$ and $W=(X_2,\ldots,X_p)$ are two independent inputs of the numerical code $f$ that is assumed to be bounded.

The random variables $X$ and $W$ are defined on a product space $\Omega=\Omega_X\times \Omega_W$; so that for any $\omega\in \Omega$, there exists $\omega_X\in \Omega_X$ and $\omega_W\in \Omega_W$ and we have $(X,W)(\omega)=(X(\omega_X),W(\omega_W))$. Further, we consider $\pi_W$ the projection on $\Omega_W$ and the product measure $\P=\P_X\otimes \P_W=\mathcal L_X\otimes \mathcal L_W$, where $\mathcal L_X$ is the distribution of $X$ and $\mathcal L_W$ is the distribution of $W$.  Naturally, $\P_W=\P\circ \pi_W^{-1}$.

 We aim to prove a CLT  for the estimator $\xi_n^{\text{Sobol'}}(X,Y)$ of the classical first-order Sobol' index with respect to $X$ given by \eqref{eq:defsob}, the estimator of which defined in \eqref{eq:coeffChatterjee_version_Sobol'}  is given by
\begin{align*}
\xi_n^{\text{Sobol'}}(X_1,Y) = \frac{\frac 1n\sum_{j=1}^n Y_jY_{N(j)}-\left(\frac 1n \sum_{j=1}^n Y_j\right)^2}{\frac 1n \sum_{j=1}^n Y_j^2-\left(\frac 1n \sum_{j=1}^n Y_j\right)^2}
\end{align*}
where $N$ is defined in \eqref{def:Nnous}. Notice that the denominator is reduced to the empirical variance of $Y$. As explained in Section \ref{ssec:estim_cvm_chatterjee}, we denote by $Y_{(j)}$ the output associated to $X_{(j)}$ where $X_{(j)}$ stands for the $j$-th order statistics of $(X_1,\ldots,X_n)$. Then observing that 
\begin{align*}
\sum_{j=1}^n Y_jY_{N(j)}=\sum_{j=1}^n Y_{(j)}Y_{(j+1)}\eqdef\sum_{j=1}^n Y_{\sigma_n(j)}Y_{\sigma_n(j+1)}
\end{align*}
where, to avoid any confusion, $\sigma_n$  
stands for the permutation that rearranges the sample $(X_1,\ldots,X_n)$, the estimator $\xi_n^{\text{Sobol'}}(X_1,Y)$ can be written as 
\begin{align}\label{eq:coeffChatterjee_version_Sobol'2}
\xi_n^{\text{Sobol'}}(X_1,Y) = \frac{\frac 1n\sum_{j=1}^{n-1} Y_{\sigma_n(j)}Y_{\sigma_n(j+1)}-\left(\frac 1n \sum_{j=1}^n Y_{\sigma_n(j)}\right)^2}{\frac 1n \sum_{j=1}^n Y_{\sigma_n(j)}^2-\left(\frac 1n \sum_{j=1}^n Y_{\sigma_n(j)}\right)^2}.
\end{align}

\subsection{Proof of Theorem \ref{th:tcl}} \label{sec:proof_th}

The proof will proceed as follows. First, in view of \eqref{eq:coeffChatterjee_version_Sobol'2}, we prove a CLT for 
\begin{equation*}
\left(\frac 1n\sum_{j=1}^{n-1}\ysnj\ysnjpu,\frac 1n\sum_{j=1}^{n}\ysnj,\frac 1n\sum_{j=1}^{n}\ysnj^2 \right).
\end{equation*}
that amounts to prove a CLT for 
\begin{equation*}
\left(\frac 1n\sum_{j=1}^{n-1}\ysnj\ysnjpu,\frac 1n\sum_{j=1}^{n-1}\ysnj,\frac 1n\sum_{j=1}^{n-1}\ysnj^2 \right),
\end{equation*}
since $f$ is bounded. 
Secondly, we use the so-called delta method \cite[Theorem 3.1]{van2000asymptotic} 
to conclude to Theorem \ref{th:tcl}.

It is worth noticing that the permutation on the $W$'s do not affect the result as  seen in the sequel. For $j=1,\ldots n-1$, introducing 

\begin{align}\label{def:Wnj}
\Delta_{n,j}\defeq f\lp X_{\sn(j)},W_{j}\rp-f\lp \frac{j}{n+1},W_{j}\rp, \quad   W_{n,j}\defeq \bigl( \frac{j}{n+1},W_{j}\bigr)
\end{align} 
leads to $
\ysnj = f\lp X_{\sn(j)},W_{\sn(j)}\rp
\egloi f\lp X_{\sn(j)},W_j\rp= \Delta_{n,j} +f\lp   W_{n,j} \rp $
and
\begin{align*}
\ysnj& \ysnjpu = f\lp X_{\sn(j)},W_{\sn(j)}\rp f\lp X_{\sn(j+1)},W_{\sn(j+1)}\rp\\
&\egloi f\lp X_{\sn(j)},W_j\rp f\lp X_{\sn(j+1)},W_{j+1}\rp\\
&=\Bigl (  f\lp   W_{n,j}\rp +\Delta_{n,j} \Bigr) \Bigl (  f\lp   W_{n,j+1}\rp +\Delta_{n,j+1} \Bigr)\\
&=   f\lp   W_{n,j}\rp f\lp   W_{n,j+1}\rp +  \Delta_{n,j} f\lp   W_{n,j+1}\rp+ \Delta_{n,j+1}f\lp   W_{n,j}\rp+ \Delta_{n,j}\Delta_{n,j+1}.
\end{align*}
Thus we are led to establish a CLT for
\begin{align}\label{def:Zn}
Z_n=\frac 1n \sum_{j=1}^{n-1} &\begin{pmatrix} 
 f(   W_{n,j}) f(   W_{n,j+1}) +   \Delta_{n,j} f\lp   W_{n,j+1}\rp+ \Delta_{n,j+1}f\lp   W_{n,j}\rp
 + \Delta_{n,j}\Delta_{n,j+1}\\
 f(  W_{n,j})+\Delta_{n,j}\\
 \bigl( f(   W_{n,j})+\Delta_{n,j}\bigr)^2
 \end{pmatrix}.
\end{align}

Let us discard the negligible terms in the CLT for $Z_n$. In that view, noticing that 
\[
\E\left[X_{\sn(j)}\right]=\frac{j}{n+1}
\quad \text{and} \quad
\Var( X_{\sn(j)})=\frac{j(n-j+1)}{(n+1)^2(n+2)}=\E\left[\lp X_{\sn(j)}-\frac{j}{n+1}\rp^2\right]\leqslant \frac{4}{n+2},
\]
we first establish 
\begin{align}\label{eq:X_behav}
 X_{\sn(j)}-\frac{j}{n+1}=O_{\P}\lp \frac{1}{\sqrt n}\rp.
\end{align}
As explained below, \eqref{eq:X_behav} will imply
\begin{align}\label{eq:delta}
%
\frac 1n \sum_{j=1}^{n-1} \Delta_{n,j}^2=O_{\P}\lp\frac {1}{ n}\rp \quad \text{and} \quad
\frac 1n \sum_{j=1}^{n-1} \Delta_{n,j}\Delta_{n,j+1}=O_{\P}\lp\frac {1}{ n}\rp.
\end{align}

First of all, we expand  $\Delta_{n,j}$ (resp. $\Delta_{n,j+1}$) using the Taylor-Lagrange formula, for any  $j=1,\ldots n-1$ and we obtain
\begin{align}\label{eq:TL_delta}
\Delta_{n,j}=\lp X_{\sn(j)}-\frac{j}{n+1}\rp 
  f_x\lp   W_{n,j}\rp +\usd \lp X_{\sn(j)}-\frac{j}{n+1}\rp^2 f_{xx}\lp \delta_{n,j},W_{\sn(j)}\rp,
\end{align}
where  $\delta_{n,j}$ (resp. $\delta_{n,j+1}$) lies in the unordered segment $(X_{\sn(j)},j/(n+1))$ (resp. $(X_{\sn(j+1)},(j+1)/(n+1))$) and where $f_x$ and $f_{xx}$ are the first and second derivatives of $f$ with respect to the first coordinate.
This leads to expansions for $\Delta_{n,j}^2$ and  $\Delta_{n,j}\Delta_{n,j+1}$:
\begin{align*}
\Delta_{n,j}^2&
= \Bigl( X_{\sn(j)}-\frac{j}{n+1}\Bigr)^2  \Bigl( f_x\lp   W_{n,j}\rp
  +\usd \lp X_{\sn(j)}-\frac{j}{n+1}\rp f_{xx}\lp \delta_{n,j},W_{\sn(j)}\rp \Bigr)^2\\
\Delta_{n,j}&\Delta_{n,j+1}
= \lp X_{\sn(j)}-\frac{j}{n+1}\rp \lp X_{\sn(j+1)}-\frac{j+1}{n+1}\rp\\
&\times \Bigl( f_x\lp   W_{n,j}\rp
  +\usd \lp X_{\sn(j)}-\frac{j}{n+1}\rp f_{xx}\lp \delta_{n,j},W_{\sn(j)}\rp \Bigr)\\
  &\times \Bigl( f_x\lp   W_{n,j+1}\rp
  +\usd \lp X_{\sn(j+1)}-\frac{j+1}{n+1}\rp f_{xx}\lp \delta_{n,j+1},W_{\sn(j+1)}\rp\Bigr).
\end{align*}
Finally, using the boundedness of $f$, $f_x$, and $f_{xx}$, together with \eqref{eq:X_behav}, \eqref{eq:delta} follows.

Remark that the proof of \eqref{eq:delta} yields also
\begin{align}\label{eq:delta_seul}
\frac 1n \sum_{j=1}^{n-1} \Delta_{n,j}=O_{\P}\lp\frac {1}{\sqrt n}\rp,
\end{align}
from which it is clear that this term will contribute in the CLT on $Z_n$. Then \eqref{eq:delta} entails that the asymptotic study 
reduces to that of the empirical mean of 
$Z_{n,j}= B_{n,j}+ C_{n,j}$ 
where
\begin{align}
B_{n,j}\defeq \begin{pmatrix} 
f\lp   W_{n,j}\rp f\lp   W_{n,j+1}\rp\\ 
f(  W_{n,j})\\
f(  W_{n,j} )^2 
\end{pmatrix}  \text{ and } 
C_{n,j}\defeq \begin{pmatrix}
\Delta_{n,j} f\lp   W_{n,j+1}\rp+ \Delta_{n,j+1}f\lp   W_{n,j}\rp\\ 
\Delta_{n,j}\\
2\Delta_{n,j}f(  W_{n,j})
\end{pmatrix}\label{eq:decompBC}.
\end{align}

First, we consider $B_{n,j}$ in \eqref{eq:decompBC} and we establish the following result, the proof of which has been postponed to Appendix \ref{app:var_B}.

\begin{lem}\label{lem:B}
As $n\to \infty$, the random vector $  B_n$ given by
\[
\frac 1n \sum_{j=1}^{n-1} B_{n,j}=\frac 1n \sum_{j=1}^{n-1} \lp f\lp   W_{n,j}\rp f\lp  W_{n,j+1}\rp, f\lp   W_{n,j}\rp,f\lp   W_{n,j}\rp^2\rp^\top
\]
satisfies a CLT. More precisely, 
$
\sqrt n \bigl( B_n-m_B\bigr) \cvloi \mathcal{N}_3(0,\Sigma_B)$,
where 
\begin{align}\label{def:mB}
 m_B\defeq \lp \E[YY'],\E[Y],\E[Y^2]\rp^\top,
\end{align} 
$Y'=f(X,W')$, $W'$ is an independent copy of $W$,
and $\Sigma_B$ has an explicit expression given in Appendix \ref{app:var_B}.
\end{lem}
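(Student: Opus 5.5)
The vector $B_n$ depends only on the deterministic grid points $j/(n+1)$ and on the i.i.d.\ sequence $W_1,\dots,W_n$. So I would treat it as a CLT for a triangular array of $1$-dependent, non-identically distributed but independent-in-$W$ summands. Set $t_j=j/(n+1)$ and $\phi_j(w)=f(t_j,w)$. Then
\[
B_{n,j}=\bigl(\phi_j(W_j)\phi_{j+1}(W_{j+1}),\ \phi_j(W_j),\ \phi_j(W_j)^2\bigr)^\top,
\]
and $(B_{n,j})_j$ is a $1$-dependent array of bounded vectors, since $f$ is bounded. The plan has two parts: a centring step, comparing $\E[B_n]$ with $m_B$ at precision $o(n^{-1/2})$, and a fluctuation step, proving a CLT for $\sqrt n(B_n-\E[B_n])$.

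For the centring, $\E[B_{n,j}]=\bigl(\E[f(t_j,W)]\E[f(t_{j+1},W')],\ \E f(t_j,W),\ \E f(t_j,W)^2\bigr)$. Write $m(x)=\E[f(x,W)]$ and $q(x)=\E[f(x,W)^2]$; both are $C^2$ with bounded derivatives by dominated convergence. Since $\E[YY']=\int_0^1 m(x)^2dx$, I replace $m(t_j)m(t_{j+1})$ by $m(t_j)^2$ at cost $O(1/n)$ per term and then use the midpoint/Riemann sum error for a $C^1$ function. This gives $\frac1n\sum_{j=1}^{n-1}m(t_j)^2-\int_0^1 m^2=O(1/n)$, and the same holds for $m$ and $q$. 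Hence $\sqrt n(\E[B_n]-m_B)\to0$. The boundary terms ($j=n$ missing, the factor $1/n$ versus $1/(n-1)$) are also $O(1/n)$.

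For the fluctuation, write $\xi_{n,j}=B_{n,j}-\E B_{n,j}$ and use the Cramér--Wold device. Fix $a\in\R^3$ and consider $S_n=n^{-1/2}\sum_j a^\top\xi_{n,j}$. I would apply a CLT for $m$-dependent triangular arrays with bounded summands, for example the Orey/Hoeffding--Robbins version or Bernstein blocking: big blocks of length $k_n\to\infty$ with $k_n=o(n)$, separated by single discarded terms, so that the big blocks are independent and the Lyapunov condition is trivial by boundedness. Alternatively, I would use the conditional trick sketched in the paper. Here a cleaner route is a martingale decomposition: write $\phi_j(W_j)\phi_{j+1}(W_{j+1})-m_jm_{j+1}$ as $(\phi_j(W_j)-m_j)m_{j+1}+m_j(\phi_{j+1}(W_{j+1})-m_{j+1})+(\phi_j(W_j)-m_j)(\phi_{j+1}(W_{j+1})-m_{j+1})$. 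The first two parts are sums of independent terms, and the last is a martingale difference sequence with respect to $\sigma(W_1,\dots,W_{j+1})$. The martingale CLT then applies, with conditional variance converging by the law of large numbers for $1$-dependent bounded arrays.

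The main obstacle is identifying the limit covariance $\Sigma_B=\lim_n n\Var(B_n)$. Because of $1$-dependence,
\[
n\Var(B_n)=\frac1n\sum_j\Bigl(\Var(B_{n,j})+\Cov(B_{n,j},B_{n,j+1})+\Cov(B_{n,j+1},B_{n,j})\Bigr)+o(1),
\]
where each summand is a smooth function of $(t_j,t_{j+1})$. Replacing $t_{j+1}$ by $t_j$ (error $O(1/n)$ by the Lipschitz bounds) turns these into Riemann sums converging to integrals over $x\sim U[0,1]$. The resulting terms are conditional (co)variances given $X$, such as $\E[\Var(YY'|X)]+2\E[\Cov(YY',Y'Y''|X)]$ for the first coordinate. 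This matches the structure $\E[\Var(YY^1|X_1)]+\E[\Cov(YY^1,YY^{11}|X_1)]$ announced in \eqref{eq:lim_var_rank}, and the remaining entries are bookkept the same way. Care is needed to check that the cross terms between independent and martingale parts are computed consistently, but no further difficulty is expected.
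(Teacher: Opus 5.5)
Your proposal is correct and follows essentially the paper's route. Like the paper, you prove a CLT for the centred, bounded, $1$-dependent array via Orey's theorem and the Cram\'er--Wold device, and you obtain $\Sigma_B$ as the limit of the Riemann sums of the variances and lag-one covariances. Your centring step also shows $\sqrt n(\E[B_n]-m_B)=O(n^{-1/2})$, using the Lipschitz regularity of $x\mapsto\E[f(x,W)]$ and $x\mapsto\E[f(x,W)^2]$ that the bounded $f_x$ provides; this is genuinely needed to centre at $m_B$, whereas the paper only shows $\E[B_n]\to m_B$.
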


Remark that $Y'$ is the so-called Pick-Freeze version of $Y$ with respect to $X$. 
Secondly, we establish a conditional CLT for the empirical mean of the $C_{n,j}$'s  defined in \eqref{eq:decompBC}. The reader is referred to Appendix \ref{app:var_C} for the proof of this result.

\begin{lem}\label{lem:C}
There exists a measurable set $\Pi \in \Omega_W$ having $\P_W$-probability one such that, for any $\omega_W\in \Pi$, we have
\[
\sqrt n C_n(\cdot{},\omega_W) \overset{\mathcal{L}_X}{\underset{n\to\infty}{\longrightarrow}} \mathcal{N}_3(0,\Sigma_C).
\]
Moreover, $\Sigma_C$ does not depend on $\omega_W$ and has an explicit expression given Appendix \ref{app:var_C}.
\end{lem}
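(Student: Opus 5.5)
The plan is to condition on the $W$-sequence and treat $C_n$ as a linear functional of the centered uniform order statistics. Fix $\omega_W$, so that the $W_j$ are frozen and only the $X$-sample is random. Using the Taylor--Lagrange expansion \eqref{eq:TL_delta} and \eqref{eq:X_behav}, the quadratic remainder contributes
\[
\frac1n\sum_{j}\lp X_{\sn(j)}-\frac{j}{n+1}\rp^2 \|f_{xx}\|_\infty=O_{\P}\lp\frac1n\rp
\]
to each coordinate of $C_n$, so it is negligible after multiplication by $\sqrt n$. Hence it suffices to study
\[
\sqrt n\,\tilde C_n=\frac{1}{\sqrt n}\sum_{j=1}^{n-1}\lp X_{\sn(j)}-\frac{j}{n+1}\rp a_{n,j}(\omega_W),
\]
with deterministic (given $\omega_W$) vector weights
\[
a_{n,j}=\bigl(f_x(W_{n,j})[f(W_{n,j+1})+f(W_{n,j-1})],\; f_x(W_{n,j}),\; 2f_x(W_{n,j})f(W_{n,j})\bigr)^\top ,
\]
obtained after re-indexing the shifted term $\Delta_{n,j+1}f(W_{n,j})$. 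Boundary terms at $j=1$ and $j=n$ are $O(n^{-1/2})$ and can be discarded.

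Next I use the representation of uniform order statistics: $X_{\sn(j)}\egloi \Gamma_j/\Gamma_{n+1}$ with $\Gamma_j=E_1+\dots+E_j$ and $E_i$ i.i.d.\ standard exponentials. Since $\Gamma_{n+1}/(n+1)\to1$ a.s., we have
\[
X_{\sn(j)}-\frac{j}{n+1}=\frac{(\Gamma_j-j)-\frac{j}{n+1}(\Gamma_{n+1}-(n+1))}{n+1}\,(1+o_{\P}(1)).
\]
Exchanging the order of summation, $\sqrt n\tilde C_n$ becomes, up to $o_{\P}(1)$, a weighted sum of independent centered variables:
\[
\sum_{i=1}^{n+1}(E_i-1)\,b_{n,i},\qquad b_{n,i}=\frac{1}{n^{3/2}}\sum_{j\ge i}a_{n,j}-\frac{1}{n^{3/2}}\sum_{j}\frac{j}{n+1}a_{n,j}.
\]
The weights $b_{n,i}$ are $O(n^{-1/2})$ uniformly, because $f,f_x$ are bounded. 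The Lindeberg condition therefore holds trivially, since $\max_i|b_{n,i}|\to0$ and the $E_i$ have all moments, and the Lindeberg--Feller CLT gives a Gaussian limit provided $\sum_i b_{n,i}b_{n,i}^\top$ converges.

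The main obstacle is showing that $\sum_i b_{n,i}b_{n,i}^\top$ converges, for $\P_W$-a.e.\ $\omega_W$, to a deterministic $\Sigma_C$. The partial sums $\frac1n\sum_{j\ge i}a_{n,j}$ should behave like $\int_{i/n}^1\E[a(x,W)]dx$, where $a(x,W)=f_x(x,W)(2\E_{W'}f(x,W')\text{ or analogous})$. This requires a uniform-in-$i$ strong law for triangular arrays
\[
\frac1n\sum_{j\le \lfloor nt\rfloor}\phi(j/(n+1),W_j,W_{j\pm1}),
\]
with $\phi$ bounded and continuous in its first argument. I plan to prove this as follows. For fixed $t$, use a fourth-moment bound (the summands are $2$-dependent) plus Borel--Cantelli, and use the continuity of $x\mapsto \phi(x,\cdot)$ coming from bounded $f_x$ to approximate the array by piecewise-constant-in-$x$ blocks. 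Then take the supremum over $t$ via monotone Glivenko--Cantelli style arguments on a countable dense set of $t$ together with Lipschitz control in $t$ (increments are bounded by $\|\phi\|_\infty$ times the mesh). Intersecting the countably many null sets gives $\Pi$.

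Once this holds, $\sum_i b_{n,i}b_{n,i}^\top$ converges to
\[
\Sigma_C=\int_0^1\Bigl(\int_s^1 \bar a(x)dx-\int_0^1 x\bar a(x)dx\Bigr)\Bigl(\cdot\Bigr)^\top ds,
\]
where $\bar a(x)=\E[a(x,W)]$ and $(\cdot)$ denotes the same vector. This can be rewritten as the double integral $\iint \bar a(x)\bar a(y)^\top(x\wedge y-xy)\,dx\,dy$, consistent with \eqref{eq:lim_var_rank}, and it is independent of $\omega_W$.
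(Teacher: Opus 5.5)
Your proposal is correct and follows essentially the paper's route. Both arguments use the same Taylor linearization conditionally on $\omega_W$ to reduce $\sqrt n\,C_n$ to a weighted sum of centered uniform order statistics. The paper then simply invokes Lemmas 1.1 and 1.4 of its supplement, whereas you prove that step directly via the exponential representation, Lindeberg--Feller, and an a.s.\ uniform law of large numbers for the $2$-dependent weight partial sums.

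One detail in your favour: your third weight $2f_x(W_{n,j})f(W_{n,j})$ is the one dictated by the definition of $C_{n,j}$ in \eqref{eq:decompBC}. The paper's display \eqref{eq:C} writes $2f(W_{n,j+1})$ there, which appears to be a slip that also affects the third row and column of its explicit $\Sigma_C$.
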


Considering the characteristic function of the vector $\sqrt n(B_n-\E[B_n],C_n)$, one may write
\begin{align*}
\E\left[e^{i(\sqrt n\pscal{s,(B_n-\E[B_n])}+\sqrt n\pscal{t,C_n})}\right]&=\E\left[e^{i\sqrt n\pscal{s,(B_n-\E[B_n])}}\E\left[e^{i\sqrt n\pscal{t,C_n}}\big\vert \mathcal F_W\right]\right]
\end{align*}
for any $s$ and $t\in \R^3$. 
On the one hand, $\E\left[e^{i\sqrt n\pscal{t,C_n} }\big\vert \mathcal F_W\right]$ converges a.s.\ to $\exp\{-t^\top \Sigma_C t /2\}$ which is not random. 
On the other hand, $\sqrt n\pscal{s,(B_n-\E[B_n])}$ converges in distribution to a Gaussian random variable denoted by $B_s$. By Slutsky's lemma, 
\[
\lp \sqrt n\pscal{s,(B_n-\E[B_n])}, \E\left[e^{i\sqrt n\pscal{t,C_n} }\big\vert \mathcal F_W\right] \rp
\]
converges in distribution to $(B_s,\exp\{-t^\top \Sigma_C t /2\})$. We consider the application $h\colon (u,v)\in \R\times D(0,1) \mapsto e^{iu}v \in \mathbb C$ where $D(0,1)$ is the unit disc in $\mathbb C$. The continuity and the boundedness of $h$ lead to the convergence in distribution of $e^{i\sqrt n\pscal{s,(B_n-\E[B_n])}}\left[e^{i\sqrt n\pscal{t,C_n}}\big\vert \mathcal F_W\right]$ and we conclude to the 
asymptotic normality of $\sqrt n(B_n-\E[B_n],C_n)$ to a six-dimensional Gaussian random vector with zero mean and variance-covariance matrix $\begin{pmatrix}
\Sigma_B & 0\\
0 & \Sigma_C\\
\end{pmatrix}$. It remains to apply the so-called delta method \cite[Theorem 3.1]{van2000asymptotic} and Slutsky's lemma to get the required result. The details of the computation of the asymptotic variance $\sigma^2$ can be found in  Appendix \ref{app:var_tcl}.

\subsection{Proof of Lemma \ref{lem:B}}\label{app:var_B}

One has 
\begin{align*}
&\E[B_n]=\frac 1n \sum_{j=1}^{n-1}  \lp
\E\left[f\lp  W_{n,j}\rp f\lp  W_{n,j+1}\rp\right],
 \E\left[f\lp  W_{n,j}\rp\right],
 \E\left[f\lp  W_{n,j}\rp^2\right]
\rp^\top,
\end{align*}
the first coordinate of which converges as $n\to \infty$ to 
\begin{align*}
\int \E\left[f\lp x,W\rp f\lp x',W'\rp\right]\text d \mathcal L_{(X,X)}(x,x')&=\int_0^1 \E\left[f\lp x,W\rp f\lp x,W'\rp\right]\text d x\\
&=\E\left[\E\left[f\lp X,W\rp f\lp X,W'\rp\vert X\right]\right]\\
&=\E\left[f\lp X,W\rp f\lp X,W'\rp\right]=\E\left[YY'\right].
\end{align*}
The two other coordinates can be handled similarly leading to 
\begin{align*}
&\E[B_n] \underset{n\to \infty}{\to}  
\lp \E[YY'],\E[Y],\E[Y^2]\rp^\top=m_B.
\end{align*}
We apply the CLT for dependent variables proved in \cite{orey1958central} to $\widetilde B_{n,j}^1$, the centered version of the random variables
$ f\bigl( W_{n,j}\bigr) f\bigl(  W_{n,j+1}\bigr)/{\sqrt n}$
 with $m=1$, $\alpha=0$, and 
 because $f$ is bounded (so is $\widetilde B_{n,j}^1$). 
Assumptions (1) and (2) in \cite{orey1958central} obviously hold, the assumption (3) is naturally fulfilled and assumption (4) is a mere consequence of Chebyshev's inequality and the boundedness of $f$. Now, it remains to check that assumption (5) holds. We have
\begin{align*}
\sum_{i,j=1}^{n-1} &\Cov( \widetilde B_{n,i}^1, \widetilde B_{n,j}^1)=\frac 1n \sum_{i,j=1}^{n-1} \Cov\left(f\lp  W_{n,i}\rp f\lp  W_{n,i+1}\rp,f\lp  W_{n,j}\rp f\lp  W_{n,j+1}\rp\right)\\
&=\frac 1n \sum_{j=1}^{n-1} \Var\left(f\lp  W_{n,j}\rp f\lp  W_{n,j+1}\rp\right) + \frac 2n \sum_{j=1}^{n-2} \Cov\left(f\lp  W_{n,j}\rp f\lp  W_{n,j+1}\rp,f\lp  W_{n,j+1}\rp f\lp  W_{n,j+2}\rp\right).
\end{align*}
On the one hand, by \cite[Lemma 1.1]{GGKL20-supp},
\begin{align*}
\frac 1n \sum_{j=1}^{n-1} & \Var\left(f\lp  W_{n,j}\rp f\lp  W_{n,j+1}\rp\right)
\underset{n\to \infty}{\to} \int \Var\left(f\lp x,W\rp f\lp x',W'\rp\right)\text d \mathcal L_{(X,X)}(x,x') \\
&= \int_0^1 \Var\left(f\lp x,W\rp f\lp x,W'\rp\right)\text dx = \E\left[\Var\left(f\lp X,W\rp f\lp X,W'\rp\vert X \right)\right]= \E\left[\Var\left(YY'\vert X\right)\right],
\end{align*}
where $W'$ is an independent copies of $W$, $Y=f(X,W)$, and $Y'=f(X,W')$.
On the other hand, by \cite[Lemma 1.1]{GGKL20-supp},
\begin{align*}
&\frac 1n \sum_{j=1}^{n-2} \Cov\left(f\lp  W_{n,j}\rp f\lp  W_{n,j+1}\rp,f\lp  W_{n,j+1}\rp f\lp  W_{n,j+2}\rp\right)\\
&\underset{n\to \infty}{\to}
\E\left[\Cov\left(f\lp X,W\rp f\lp X,W'\rp,f\lp X,W'\rp f\lp X,W''\rp\vert X\right)\right] =\E\left[\Cov\left(YY',YY''\vert X\right)\right],
\end{align*}
where $W'$ and $W''$ are two independent copies of $W$. Further, $Y=f(X,W)$, $Y'=f(X,W')$, and $Y''=f(X,W'')$.
Actually, notice that all linear combination of the coordinates of 
\begin{align}\label{def:one_dep}
\bigl( f(  W_{n,j}) f(  W_{n,j+1}), f(  W_{n,j}),f(  W_{n,j})^2\bigr)^\top
\end{align} 
is a one-dependent random variable. In addition, following the same lines as above, one may check that any linear combination still satisfies the assumptions of \cite{orey1958central}. Hence, any linear combination of the coordinates of $ B_n$ satisfies a CLT so that Lemma \ref{lem:B} is proved, up to the computation of the asymptotic variance-covariance matrix $\Sigma_B$ done in what follows.

\subsubsection*{Computation of the asymptotic covariance matrix \texorpdfstring{$\Sigma_B$}{f}}

We consider a linear combination of the random vector in \eqref{def:one_dep} given by
\begin{align*}
u f(  W_{n,j}) f(  W_{n,j+1})+v f(  W_{n,j})+wf(  W_{n,j})^2,
\end{align*} 
where $(u,v,w)\in \R^3$. This one-dimensional random vector is one-dependent and its centered version normalized by $\sqrt n$, denoted by ${\widetilde  B}_{n,j}$,  satisfies the assumptions of  \cite{orey1958central}. To calculate the asymptotic variance-covariance matrix $\Sigma_B$, we compute explicitly the limit of 
\[
\sum_{i,j=1}^{n-1} \Cov({\widetilde B}_{n,i},{\widetilde B}_{n,j}),
\]
as $n\to \infty$ using \cite[Lemma 1.1]{GGKL20-supp}. 
It remains to take $(1,0,0)$, $(0,1,0)$ and $(0,0,1)$ to get the diagonal terms of the asymptotic variance-covariance matrix and to solve a three-dimensional system of equations to get the remaining terms. Finally, as computed previously and using notation of \cite[Lemma 1.1]{GGKL20-supp}, the first diagonal term of $\Sigma_B$ is : 
\begin{align*}
&\Sigma_B^{1,1}=  \int \Var\left(f\lp x,W\rp f\lp x',W'\rp\right)\text d \mathcal L_{(X,X)}(x,x')\\
& \quad + 2\int \Cov\left(f\lp x,W\rp f\lp x',W'\rp,f\lp x',W'\rp f\lp x'',W''\rp\right)\text d \mathcal L_{(X,X,X)}(x,x',x'')\\
&=  \int_0^1 \Var\left(f\lp x,W\rp f\lp x,W'\rp\right)\text dx + 2\int_0^1 \Cov\left(f\lp x,W\rp f\lp x,W'\rp,f\lp x,W'\rp f\lp x,W''\rp\right)\text dx\\
&=\E\left[\Var\left(f\lp X,W\rp f\lp X,W'\rp|X\right) \right] + 2\E\left[\Cov\left(f\lp X,W\rp f\lp X,W'\rp,f\lp X,W'\rp f\lp X,W''\rp|X\right)\right]\\
&=\E\left[\Var\left(YY'|X\right) \right]
+ 2\E\left[\Cov\left(YY',YY''|X\right)\right],
\end{align*}
where we remind that $Y=f(X,W)$, $Y'=f(X,W')$, and $ Y''=f(X,W'')$ with $W'$ and $W''$ independent copies of $W$. 
The other terms are
%
\begin{align*}
\Sigma_B^{2,2}&=  
\int_0^1 \Var\left(f\lp x,W\rp \right)\text d x
=\E\left[\Var\left(f\lp X,W\rp \vert X\right)\right]
=\E\left[\Var(Y \vert X)\right],\\
\Sigma_B^{3,3}&=  \int_0^1 \Var\left(f\lp x,W\rp^2 \right)\text d x=\E\left[\Var\left(Y^2\vert X\right)\right],\\
\Sigma_B^{1,2}&=\Sigma_B^{2,1}=2\int_0^1 \Cov\left(f\lp x,W\rp f\lp x,W'\rp,f\lp x,W\rp \right)\text dx= 2 \E\left[\Cov\left(YY',Y\vert X\right)\right],\\
\Sigma_B^{1,3}&= \Sigma_B^{3,1}= 2 \int_0^1 \Cov\left(f\lp x,W\rp f\lp x,W'\rp,f\lp x,W\rp^2 \right)\text dx
 =2 \E\left[\Cov\left(YY',Y^2\vert X\right)\right],\\
\Sigma_B^{2,3}&=\Sigma_B^{3,2} =\int_0^1 \Cov\left(f\lp x,W\rp,f\lp x,W\rp ^2\right)\text d x=\E\left[\Cov(Y,Y^2\vert X)\right].
\end{align*}

\subsection{Proof of Lemma \ref{lem:C}}\label{app:var_C}

Let $\omega_W \in\Pi$ as defined in \cite[Lemma 1.1]{GGKL20-supp}. The aim is to establish a CLT for $\sqrt n C_{n,j}(\cdot{},\omega_W)$. To ease the reading, we omit the notation $(\cdot{},\omega_W)$ as classically done in probability.
First, dealing with the first coordinate $ f\lp   W_{n,j+1}\rp \Delta_{n,j} +f\lp   W_{n,j} \rp \Delta_{n,j+1} $ of $C_{n,j}$ defined in \eqref{eq:decompBC}, one has 
\begin{align*}
 f\lp   W_{n,j+1}\rp \Delta_{n,j} 
 =  &\lp X_{\sn(j)}-\frac{j}{n+1}\rp   f\lp   W_{n,j+1}\rp
  f_x\lp   W_{n,j}\rp \\
& + \usd   \lp X_{\sn(j)}-\frac{j}{n+1}\rp^2 f\lp   W_{n,j+1}\rp f_{xx}\lp \delta_{n,j},W_{j}\rp 
\end{align*}
using the expansion of  $\Delta_{n,j}$ given in \eqref{eq:TL_delta}.
By \eqref{eq:X_behav} and using the boundedness of $f$ and $f_{xx}$, we get that
\begin{align*}
\frac 1n \sum_{j=1}^{n-1} &\lp X_{\sn(j)}-\frac{j}{n+1}\rp^2  f\lp   W_{n,j+1}\rp    f_{xx}\lp \delta_{n,j},W_{j}\rp
\end{align*}
is $O_{\P}\lp 1/n\rp$. 
We follow the same lines to treat the term $f\lp   W_{n,j} \rp \Delta_{n,j+1}$
and thus 
\begin{align*}
\frac 1n & \sum_{j=1}^{n-1} 
  f\lp   W_{n,j+1}\rp \Delta_{n,j} +f\lp   W_{n,j} \rp \Delta_{n,j+1}= \frac 1n \sum_{j=1}^{n-1}  \lp X_{\sn(j)}-\frac{j}{n+1}\rp   f\lp   W_{n,j+1}\rp
  f_x\lp   W_{n,j}\rp \\
  & \quad + \frac 1n \sum_{j=1}^{n-1}  \lp X_{\sn(j+1)}-\frac{j+1}{n+1}\rp   f\lp   W_{n,j}\rp
  f_x\lp   W_{n,j+1}\rp +  O_{\P}\lp \frac 1n\rp\\
 &= 
\frac 1n \sum_{j=1}^{n-1} 
\lp X_{\sn(j)}-\frac{j}{n+1}\rp  f_x\lp   W_{n,j}\rp
\lp f\lp   { W_{n,j-1}} \rp+ f\lp   W_{n,j+1}\rp \rp + O_{\P}\lp \frac 1n\rp.
\end{align*}
So that, using again the expansion of  $\Delta_{n,j}$ given in \eqref{eq:TL_delta}, \eqref{eq:X_behav}, and the boundedness of $f$ and $f_{xx}$ to handle the second and third coordinate of $C_{n,j}$, the study of $C_n$ reduces to that of the random vector
\begin{align}\label{eq:C}
\frac 1n &\sum_{j=1}^{n-1} 
\lp X_{\sn(j)}-\frac{j}{n+1}\rp  f_x\lp   W_{n,j}\rp \begin{pmatrix}
f\lp   { W_{n,j-1}} \rp+ f\lp   W_{n,j+1}\rp
\\
1\\
2f\lp   W_{n,j+1}\rp
\end{pmatrix}
\end{align}
by the independence between $\sigma_n$ and $W_1,\ldots, W_n$.
In that view, let us consider the following linear combination 
$u (f( { W_{n,j-1}})+f(  W_{n,j+1}))+v +2wf(  W_{n,j+1})$,
where $(u,v,w)\in \R^3$ and the empirical mean 
\begin{align}
\frac 1n \sum_{j=1}^{n-1} 
\big( X_{\sn(j)}-&\frac{j}{n+1}\big)  f_x\lp   W_{n,j}\rp \times \lp u (f(  {W_{n,j-1}}) +f(  W_{n,j+1}))+v +2wf(  W_{n,j+1})\rp.\label{eq:emp_mean}
\end{align} 
Now it remains to apply \cite[Lemma 1.4]{GGKL20-supp} \footnote{A slightly generalization of this lemma is required to handle the pair $(j/(n+1),(j+1)/(n+1))$ rather than the quantity $j/n$. Its proof comes directly following the same lines as in the proof of this lemma} with {$\chi_{j}=\lp  W_{j-1}, W_j,W_{j+1}\rp$}
and $\psi=\psi_{uvw}$ with 
\begin{align}\label{def:psi_lemC}
\psi_{uvw}\lp \frac{j-1}{n+1},\frac{j}{n+1},\frac{j+1}{n+1},\chi_j\rp=f_x\lp   W_{n,j}\rp \lp u (f(  {W_{n,j-1}} ) +f(  W_{n,j+1}))+v +2wf(  W_{n,j+1})\rp,
\end{align}
noticing that, as $n\to \infty$, 
$(1/n)\sum_{j=1}^{n-1}\delta_{(j-1)/(n+1),j/(n+1),(j+1)/(n+1),\chi _{j}}$  converges in distribution to $Q= \mathcal L_{(X,X,X)} \otimes \mathcal L_W \otimes \mathcal L_W  \otimes \mathcal L_W$
by \cite[Lemma 1.1]{GGKL20-supp}. Thus we deduce that the empirical mean 
in \eqref{eq:emp_mean} converges in distribution for any 3-uplet $(u,v,w)$. 
Since any linear combination of the components of the random vector defined in \eqref{eq:C} satisfies a CLT, so does the random vector itself. The proof of Lemma \ref{lem:C} is now complete, up to the computation of the asymptotic variance-covariance matrix $\Sigma_C$ done in the paragraph that follows.

\subsubsection*{Computation of the asymptotic covariance matrix \texorpdfstring{$\Sigma_C$}{f}}

We use the explicit expression (4) in the proof of \cite[Lemma 1.4]{GGKL20-supp} of the asymptotic variance $\sigma_{\psi}^2$ (actually a slightly generalized version of the lemma) with {$Q = \mathcal L_{(X,X,X)} \otimes \mathcal L_W \otimes \mathcal L_W \otimes \mathcal L_W$} and 
with $\psi$ given by \eqref{def:psi_lemC}. Then taking the values
$(1,0,0)$, $(0,1,0)$ and $(0,0,1)$ leads to the diagonal terms of the asymptotic variance-covariance matrix $\Sigma_C$ while solving a three-dimensional system of equations provides the remaining terms. For instance,
reminding that {$\chi_j=(W_{j-1},W_j,W_{j+1})$} and $W_{n,j}=(j/(n+1),W_j)$ and 
\[
\psi_{100}\left(\frac{j-1}{n+1},\frac{j}{n+1},\frac{j+1}{n+1},\chi_j \right)=f_x\lp  W_{n,j} \rp  (f( {W_{n,j-1}}) +f(  W_{n,j+1}))
\]
(namely, $\psi_{uvw}$ with $(u,v,w)=(1,0,0)$), we have
\begin{align*}
\Sigma_C^{1,1}=&\int \psi_{100}(x_1,x'_1,x''_1,\chi_1)\psi_{100}(x_2,x'_2,x''_2,\chi_2) x_1\wedge x_2\wedge x_1'\wedge x_2'\wedge x_1''\wedge x_2'' \nonumber\\
&\times \text d Q(x_1,x'_1,x''_1,\chi_1) \text d Q(x_2,x'_2,x''_2,\chi_2) -\left(\int \psi_{100}(x,x',x'',\chi) x\wedge x'\wedge x'' \text d Q(x,x',x'',\chi)\right)^2\nonumber\\
=&\E[(Y_1+Y_1')(Y_2+Y_2')f_x(X_1,W_1) f_x(X_2,W_2) (X_1\wedge X_2)]-\E[(Y+Y')f_x(X,W) X]^2,
\end{align*}
where we remind that $Y=f(X,W)$ and $Y'=f(X,W')$ with $W'$ an independent copy of $W$ (and analogously for $Y_1$ and $Y_2$). 
Finally, the remaining terms of $\Sigma_C$ are:
\begin{align*}
\Sigma_C^{2,2}&=\E[f_x(X_1,W_1) f_x(X_2,W_2) (X_1\wedge X_2)]-\E[f_x(X,W) X]^2\\
\Sigma_C^{3,3}&=4\E[Y_1'Y_2'f_x(X_1,W_1) f_x(X_2,W_2) (X_1\wedge X_2)]-4\E[Y'f_x(X,W) X]^2\\
\Sigma_C^{1,2}&=\Sigma_C^{2,1}=\E[(Y_1+Y'_1)f_x(X_1,W_1)f_x(X_2,W_2) (X_1\wedge X_2)]-\E[(Y+Y')f_x(X,W) X]\E[f_x(X,W) X]\\
\Sigma_C^{1,3}&=\Sigma_C^{3,1}=2\E[(Y_1+Y'_1)f_x(X_1,W_1) Y_2'f_x(X_2,W_2) (X_1\wedge X_2)]-2\E[(Y+Y')f_x(X,W) X]\E[Y'f_x(X,W) X]\\
\Sigma_C^{2,3}&=\Sigma_C^{3,2}=2\E[f_x(X_1,W_1)Y_2' f_x(X_2,W_2)  (X_1\wedge X_2)]-2\E[f_x(X,W) X]\E[Y'f_x(X,W) X].
\end{align*}

\subsection{Asymptotic variance \texorpdfstring{$\sigma^2$}{f} of Theorem \ref{th:tcl}}\label{app:var_tcl}

We have proved yet that
\[ 
\sqrt n \left( \begin{pmatrix}
B_n\\
C_n\\
\end{pmatrix}
 - 
\begin{pmatrix}
m_B\\
0\\
\end{pmatrix}  
 \right)\cvloi
		 \mathcal N_6 \left( 0, \begin{pmatrix}
\Sigma_B & 0\\
0 & \Sigma_C\\
\end{pmatrix}
\right),
\]
where the explicit expressions of $m_B$, $\Sigma_B$ and $\Sigma_C$ are given in \eqref{def:mB} of Lemma \ref{lem:B}, Appendices \ref{app:var_B} and \ref{app:var_C} respectively.
Applying the so-called delta method \cite[Theorem 3.1]{van2000asymptotic} to the linear function $f(x,y)=x+y$, we conclude that
\begin{align}\label{eq:CLT_somme}
\sqrt n ( Z_n - m_B)\cvloi
		 \mathcal N_3 \left( 0, 
\Sigma_B + \Sigma_C
\right)
\end{align}
Further, we notice that
$ 
\xi_n^{\text{Sobol'}}(X,Y) \egloi \Psi(Z_n) 
$ 
with $\Psi (x,y,z) = ( x-y^2 )/(z - y^2)$. 	 
The so-called delta method \cite[Theorem 3.1]{van2000asymptotic} then gives
\[ 
\sqrt N \left( \xi_n^{\text{Sobol'}}(X,Y) - S^X\right) \cvloi \mathcal N_1(0, \sigma^2) \]
where $S^X= \Var(\E[Y|X])/\Var(Y)$ is the first-order  Sobol' index with respect to $X$ and $\sigma^2=g^\top (\Sigma_B+\Sigma_C) g$ with
$
g =  \nabla \Psi(m_B)$.
By assumption $\Var(Y) \neq 0$, $\Psi$ is differentiable at $m_B$ and we will see in the sequel that $g^\top (\Sigma_B+\Sigma_C) g \neq 0$, so that the application of the delta method is justified. By differentiation, we get that, for any $x$, $y$, and $z$ so that $z \neq y^2$:
\begin{align}\label{eq:nabla_psi}
\nabla \Psi ( x, y, z) =\left( \frac{1}{z-y^2}, -2y \frac{z-x}{ (z-y^2)^2 },- \frac{x-y^2}{(z-y^2)^2} \right)^\top 
\end{align}
so that
\begin{align*}
g = \nabla \Psi(m_B)&=\left( \frac{1}{\Var(Y)}, 
 2\E[Y]\frac{\E[YY']-\E[Y^2]}{\Var(Y)^2},
 - \frac{S^X}{\Var(Y)} \right)^\top=\frac{1}{\Var(Y)} \left( 1, 
 2\E[Y](S^X-1),
 - S^X \right)^\top.
\end{align*} 
Hence the asymptotic variance $\sigma^2$ in Theorem \ref{th:tcl} is finally given by $
	\sigma^2 = g^\top \lp \Sigma_B + \Sigma_C\rp  g $
where $\Sigma_B$ and $\Sigma_C$ have been defined in Appendices \ref{app:var_B} and \ref{app:var_C} respectively. The matrix $\Sigma_B$ rewrites as
\[
\Sigma_B=\begin{pmatrix}
v_{01}+2c_{01,02} & 2c_{01,03} & 2c_{01,00}\\
2c_{01,03} & \Var(Y)(1-S^X) & 2c_{03,00}\\
2c_{01,00} & 2c_{03,00} & v_{00} 
\end{pmatrix}
\]
where
$v_{ij}=\E[\Var(A_iA_j|X)]$, $
c_{ij,kl}=\E[\Cov(A_iA_j,A_kA_l|X)]$, 
 $A_0=Y$, $A_1=Y'$, $A_2=Y''$, and $A_3=1$ ($Y$ and $Y''$ have been defined just before \eqref{def:one_dep}).
The matrix $\Sigma_C$ rewrites as
\[
\Sigma_C=\begin{pmatrix}
s^2_{\psi_{100}} & s^2_{\psi_{110}} & s^2_{\psi_{101}}\\
s^2_{\psi_{110}} & s^2_{\psi_{010}} & s^2_{\psi_{011}}\\
s^2_{\psi_{101}} & s^2_{\psi_{011}} & s^2_{\psi_{001}} 
\end{pmatrix}
\]
where $s^2_{\psi}$ and $\psi_{uvw}$ have been defined in \cite[Equation (4)]{GGKL20-supp} and \eqref{def:psi_lemC} respectively. 

\section{Proof of the asymtotic efficiency of \texorpdfstring{$R_n^1$}{f}}\label{app:ae}

\begin{proof}[Proof of Proposition \ref{prop:ae}]
By \cite[Theorems 3.4 and 3.5]{da2008efficient} and classical results on efficiency, observe that
\begin{align*}
 U_n = \left( \widehat T_n ,\frac{1}{n} \sum_{i=1}^n Y_i, \frac{1}{n} \sum_{i=1}^n Y_i^2 \right)^\top
 \end{align*}
is asymptotically efficient, componentwise, for estimating 
$U = \left( \E[\E[Y\vert X]^2], \E[Y],\E[Y^2] \right)^\top$.
The efficiency in product space \cite[Theorem 25.50]{van2000asymptotic} yields the joint efficiency from this componentwise efficiency. Now, we consider once again the function $\Psi$ introduced in the proof of Theorem \ref{th:tcl}. Since $\Psi$ is differentiable on
$\R^3 \setminus \left\{(x,y,z)\, \big|\, z \neq y^2 \right\}$, the efficiency and delta method result \cite[Theorem 25.47]{van2000asymptotic} implies that $\left( \Psi\left( U_n\right) \right )_n$ is asymptotically efficient for estimating $\Psi(U)$. The conclusion follows as $\Psi(U)=S^X$.

Let us compute the minimal variance. To do so, assume that the joint distribution $P$ of $(X,Y)$ is absolutely continuous with respect to the Cartesian product $P_X\otimes P_Y$, namely $P(dx,dy)=f(x,y)P_X(dx) P_Y(dy)$. Then 
\begin{align*}
\E[Y\vert X=x]&=\int y f_{Y\vert X=x}(y)P_Y(dy)= \int y \frac{f(x,y)}{\int f(x,y)P_Y(dy)}P_Y(dy).
\end{align*}
For any $t\in (0,1)$, let us introduce $f_t(x,y)\defeq (1+th(x,y)) f(x,y)$  and
\[
P_t(dx,dy)\defeq (1+th(x,y)) f(x,y)P_X(dx) P_Y(dy)
\]
where $h(x,y)>-1$ and 
$
\int h(x,y) f(x,y)P_x(dx) P_Y(dy)=0$.
Now we consider the function
\begin{align*}
F(t)&\defeq \iint_{x,y'} \left(\frac{\int y f_t(x,y)P_Y(dy)}{\int  f_t(x,y)P_Y(dy)}\right)^2 P_t(dx,dy').
\end{align*}
Denoting by $G(x,t)\defeq \int y f_t(x,y)P_Y(dy)/\int  f_t(x,y)P_Y(dy)$, one gets 
\begin{align*}
F'(t)&=\iint_{x,y'} \left[2 G(x,t) \frac{\partial }{\partial t} G(x,t) f_t(x,y') +G(x,t)^2 h(x,y')f(x,y') \right] P_X(dx) P_Y(dy')
\end{align*}
so that $
F'(0)
=\pscal{\E[Y\vert X=x](2y-\E[Y\vert X=x]),h}_P$.
The interest function $
I\defeq \E[Y\vert X](2Y-\E[Y\vert X])
$
has $\E[\E[Y\vert X]^2]$ and variance
$
\Var(\E[Y\vert X](2Y-\E[Y\vert X]))$.
Hence it remains to apply the delta method to get the final (minimal) variance
\[
g^\top \begin{pmatrix}
\Var(I) & \Cov(I,Y) & \Cov(I,Y^2)\\
\Cov(I,Y) & \Var(Y) & \Cov(Y,Y^2)\\
\Cov(I,Y^2) & \Cov(Y,Y^2) & \Var(Y^2)
\end{pmatrix}  g 
\]
where $g \defeq  \nabla \Psi(U)$, and by \eqref{eq:nabla_psi},
\begin{align*}
g = \left( \frac{1}{\Var(Y)}, 
 2\E[Y]\frac{\E[\E[Y\vert X]^2]-\E[Y^2]}{\Var(Y)^2},
 - \frac{S^X}{\Var(Y)} \right)^\top=\frac{1}{\Var(Y)} \left( 1, 
 2\E[Y](S^X-1),
 - S^X \right)^\top.
\end{align*} 
Finally, one gets the minimal variance mentioned in Proposition \ref{prop:ae}.
\end{proof}

\begin{rem}
This result can be also obtained making a LAN perturbation of the functional derivative on the tangent space. In this setting and following the notation of \cite[Chapitre 25]{van2000asymptotic}, 
let us consider the functional $\Phi$ defined by
\[
\Phi(P)\defeq \frac{\E_P[\E_P[Y|X]]-\E_P[Y]^2}{\E_P[Y^2]-\E_P[Y]^2}. 
\]
Then, with the notation $P_t$ for $t\in (0,1)$ introduced in the above proof, one gets
\begin{align*}
\frac{d}{dt} \Phi(P_t)\restrict{t=0}
&=\frac{1}{\Var(Y)}\pscal{\E[Y|X](2Y-\E[Y|X])-2\E[Y] Y-S^X(Y^2-2\E[Y]Y),h}_P
\end{align*}
leading to 
$\tilde \Phi\defeq \frac{1}{\Var(Y)}\left(2\E[Y]  Y (1-S^X)+S^X Y^2 -\E[Y|X](\E[Y|X]-2Y)\right)$
and the minimal variance is given by
$
\sigma_{\min}^2
= \Var(\tilde \Phi)=\frac{1}{\Var(Y)^2}\Var\left(2\E[Y](1-S^X)Y+S^XY^2+\E[Y\vert X](\E[Y\vert X]-2Y)\right)$
that coincides with the expression obtained via the delta method  in Proposition \ref{prop:ae}.
\end{rem}

%
%
%

\bibliographystyle{abbrv}
\bibliography{biblio_SA}

\includepdf[pages=-]{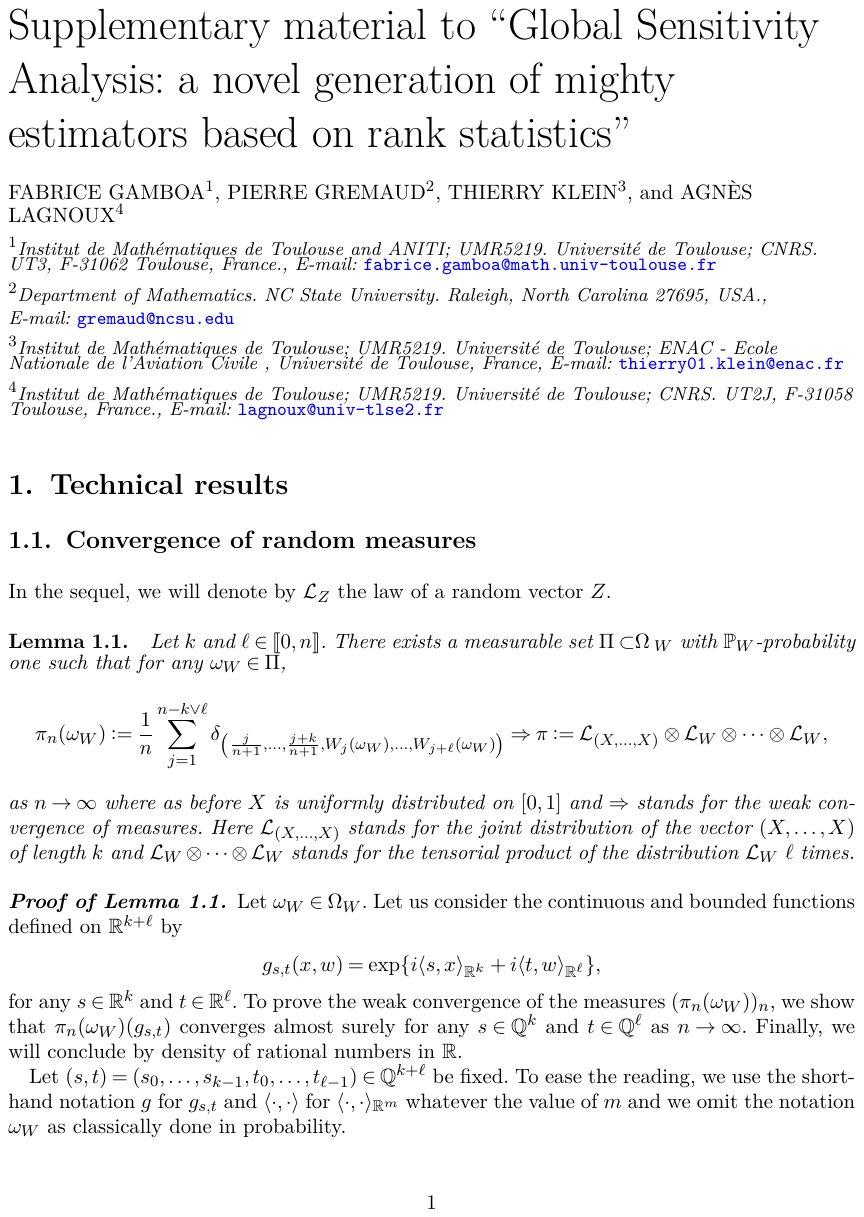}

\includepdf[pages=-]{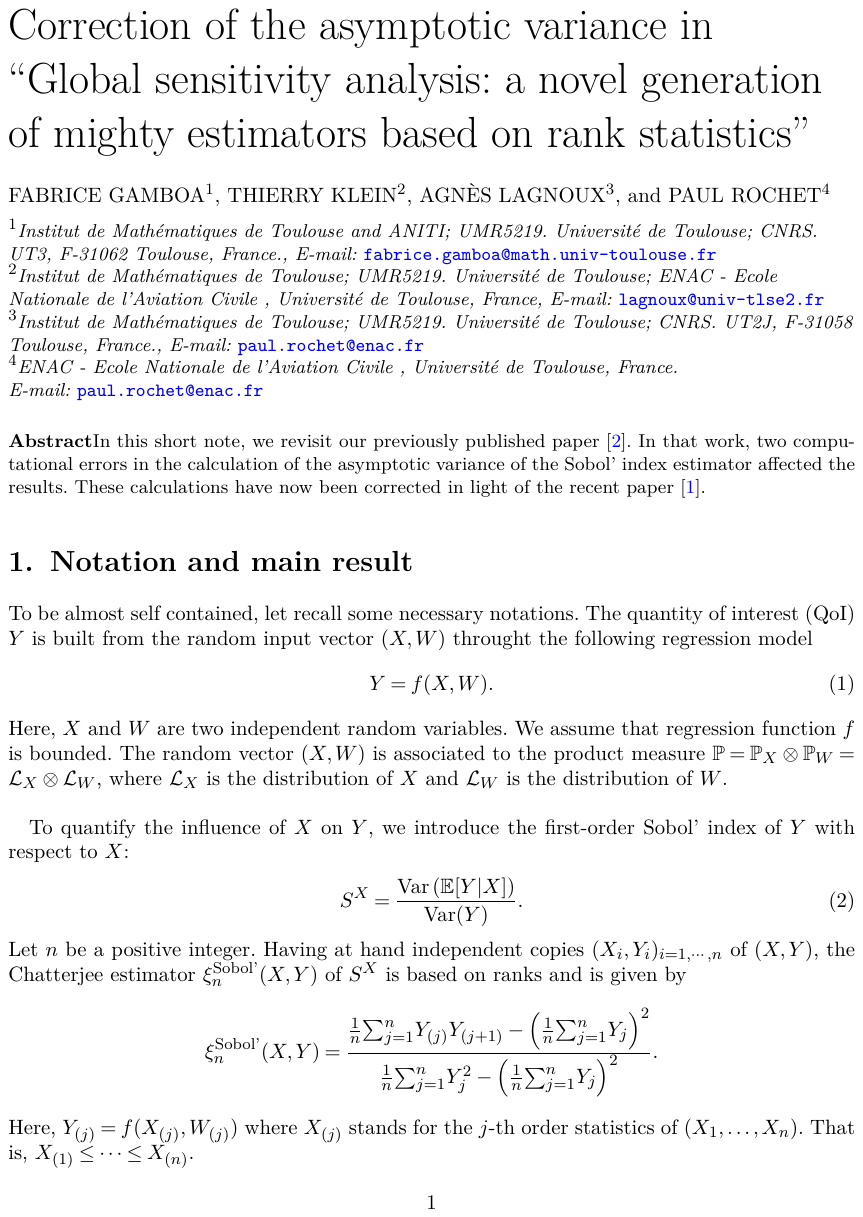}

\end{document}